\documentclass[technologies,article,accept,pdftex,moreauthors]{Definitions/mdpi}

\usepackage{graphicx}%
\usepackage{multirow}%
\usepackage{amsmath,amssymb,amsfonts}%
\usepackage{amsthm}%
\usepackage{mathrsfs}%
\usepackage[title]{appendix}%
\usepackage{xcolor}%
\usepackage{textcomp}%
\usepackage{manyfoot}%
\usepackage{booktabs}%
\usepackage{algorithm}%
\usepackage{algorithmicx}%
\usepackage{algpseudocode}%
\usepackage{listings}%
\usepackage{url}
\usepackage{rotating}

\definecolor{em}{rgb}{1,0,0}

\theoremstyle{plain}
\newtheorem{thm}{Theorem}
\newtheorem{lem}[theorem]{Lemma}

\firstpage{1} 
\makeatletter 
\setcounter{page}{\@firstpage} 
\makeatother
\pubvolume{1}
\issuenum{1}
\articlenumber{0}
\pubyear{2023}
\copyrightyear{2023}
\externaleditor{Academic Editor: Firstname Lastname}
\datereceived{4 August 2023} 
\daterevised{1 September 2023} 
\dateaccepted{12 October 2023} 
\datepublished{ } 
\hreflink{https://doi.org/} 

\Title{Dual-Matrix Domain Wall: A Novel Technique for Generating Permutations by QUBO and Ising Models with Quadratic Sizes} 

\TitleCitation{Dual-Matrix Domain Wall: A Novel Technique for Generating Permutations by QUBO and Ising Models with Quadratic Sizes}


\Author{
{Koji} 
 Nakano $^{1,}$*\orcidA{},
Shunsuke Tsukiyama $^{1}$, 
Yasuaki Ito $^{1}$\orcidB{},
Takashi Yazane $^{2}$,
Junko Yano $^{2}$,
Takumi Kato $^{2}$,
\linebreak Shiro Ozaki $^{2}$,
Rie Mori $^{2}$
and Ryota Katsuki $^{2}$}


\AuthorNames{
Koji Nakano, 
Shunsuke Tsukiyama,
Yasuaki Ito,
Takashi Yazane,
Junko Yano,
Takumi Kato,
Shiro Ozaki,
Rie Mori
and Ryota Katsuki}

\AuthorCitation{Nakano, K.; Tsukiyama, S.; Ito, Y.; Yazane, T.; Yano, J.; Kato, T.; Ozaki, S.; Mori, R.; Katsuki, R.}

\address{%
$^{1}$ \quad Graduate School of Advanced Science and Engineering, Hiroshima University, Kagamiyama 1-4-1, Higashihiroshima 739-8527, {Hiroshima,} 
  Japan; {tsukiyama@cs.hiroshima-u.ac.jp (S.T.); yasuaki@cs.hiroshima-u.ac.jp (Y.I.)} 
\\
$^{2}$ \quad Research and Development Headquarters, NTT DATA Group Corporation, Toyosu Center Bldg, Annex, 3-9, Toyosu 3-chome, {Koto-ku} 
 135-8671, {Tokyo,} Japan; {takashi.yazane@nttdata.com (T.Y.); \linebreak junko.yano@nttdata.com (J.Y.); takumi.kato@nttdata.com (T.K.); shiro.ozaki@nttdata.com (S.O.); rie.mori@nttdata.com (R.M.); ryota.katsuki@nttdata.com (R.K.)}}

\corres{Correspondence: nakano@cs.hiroshima-u.ac.jp}

\abstract{The Ising model is defined by an objective function using a quadratic formula of qubit variables. The problem of an Ising model aims to determine the qubit values of the variables that minimize the objective function, and many optimization problems can be reduced to this problem.
In this paper, we focus on optimization problems related to permutations, where the goal is to find the optimal permutation out of the $n!$ possible permutations of $n$ elements. To represent these problems as Ising models, a commonly employed approach is to use a kernel that applies one-hot encoding to find any one of the $n!$ permutations as the optimal solution. However, this kernel contains a large number of quadratic terms and high absolute coefficient values.
The main contribution of this paper is the introduction of a novel permutation encoding technique called the dual-matrix domain wall, which significantly reduces the number of quadratic terms and the maximum absolute coefficient values in the kernel. Surprisingly, our dual-matrix domain-wall encoding reduces the quadratic term count and maximum absolute coefficient values from $n^3-n^2$ and $2n-4$ to $6n^2-12n+4$ and $2$, respectively.
We also demonstrate the applicability of our encoding technique to partial permutations and Quadratic Unconstrained Binary Optimization (QUBO) models. Furthermore, we discuss a family of permutation problems that can be efficiently implemented using Ising/QUBO models with our dual-matrix domain-wall encoding.
}


\keyword{quantum computing; combinatorial optimization; traveling salesman problem; graph isomorphism problem}

\begin{document}

\section{Introduction}
\emph{{A Binary Quadratic Model (BQM)} 
}~\cite{dimod-BQM} is defined by an objective function that includes a quadratic formula with multiple variables.
The problem associated with a BQM is to find the values of these variables that minimize the resulting value of the quadratic formula.
A BQM is referred to as \emph{a Quadratic Unconstrained Binary Optimization (QUBO)}~\cite{Kochenberger14} model when the variables are restricted to binary values, i.e., they can only take \emph{bit (or binary)} values in $\{0,1\}$.
On the other hand, if the variables can only take \emph{qubit (or spin)} values in $\{-1,+1\}$, the model is called an Ising model.
It is worth noting that QUBO and Ising models can be converted into each other interchangeably~\cite{dimod-BQM,Tao20}.

\textls[-10]{Since optimization problems such as the traveling salesman problem; scheduling problems; and various graph problems, including max cut, maximum independent set, and graph isomorphism, can be transformed into QUBO/Ising models~\cite{Lucas14},
there has been significant research dedicated to finding efficient algorithms, hardware, and systems to solve them.
However, solving optimization problems for QUBO/Ising models is known to be NP-hard.
This means that unless $P=NP$, it is not possible to design a polynomial time algorithm using classical computers with digital circuit devices of polynomial size.
In the quest for solutions, researchers have explored the potential of ideal quantum annealers based on quantum mechanics, which could potentially find optimal solutions for large Ising models within a reasonable time frame~\cite{Kadowaki98}.
Unfortunately, the current quantum annealers available are not yet powerful enough to tackle such problems effectively.
The number of qubits is limited, and the presence of undesirable flux noise significantly reduces the probability of finding optimal solutions~\cite{Zaborniak21}.
Hence, as an alternative to an ideal quantum annealer, BQM solvers
on various non-quantum computing platforms, such as ASICs~\cite{Oku19,Yamamoto21}, 
FPGAs~\cite{Goto21,Goto19,Kagawa21,Matsubara17},
GPUs~\cite{Okuyama19,Yasudo-JPDC22,Yasudo-APDCM22,Nakano23}, and optimal fibers~\cite{Honjo21,Inagaki16}, have been proposed.
Further, D-Wave Systems released a hybrid BQM solver~\cite{D-Wave-Hybrid20} that
uses both a classical computer and a quantum annealer
to find solutions for large BQMs with up to 1,000,000-node complete graphs.}

In this paper, our primary focus is on the size and resolution of QUBO/Ising models, which are represented by the quadratic term count and the maximum absolute value of the coefficients, respectively.
These metrics serve as important indicators for assessing the models.
While it is not always the case, QUBO/Ising models with smaller sizes and resolutions are generally considered more desirable.
This is particularly true when considering the limitations of quantum annealers, which have restricted size and resolution capabilities.
Therefore, it becomes crucial to ensure that the Ising models embedded in quantum annealers possess small sizes and low resolutions.
 Additionally, even when these models are processed by classical digital computers, the required memory size to store QUBO/Ising models is proportional to the quadratic term count, and a higher resolution requires a larger word size of memory. 
With this perspective in mind, this paper places significant emphasis on the size and resolution of QUBO/Ising models.
We provide precise evaluations of these metrics, offering valuable insights into their characteristics.

QUBO/Ising models that are converted from permutation-based combinatorial optimization problems for $n$ elements should have a kernel capable of generating any one of the $n!$ possible permutations as the optimal solution.
To achieve this, a common approach is to use the one-hot encoding of permutations, which involves a $bit/qubit$ matrix of size $n\times n$.
In this encoding, each row $i$ ($0\leq i\leq n-1$) of the matrix represents the $i$-th number as a one-hot vector, where exactly one element is set to $1$/$+1$, and its position corresponds to the number it represents.
{The reader should refer to Figure~\ref{fig:onehot}, which illustrates a $4\times 4$ matrix representing the permutation $[1,3,2,0]$.}
In order to generate any one of these matrices as the optimal solution, QUBO/Ising models require kernels that have optimal solutions if and only if each row and each column contains exactly one $1$/$+1$.
{For instance, QUBO models utilizing this kernel have been proposed for addressing Hamiltonian cycle/path problems and the graph isomorphism problem, as demonstrated in~\cite{Lucas14}. Similarly, models for the traveling salesman problem (TSP) and the quadratic assignment problem (QAP) were introduced in~\cite{Ayodele22,Goh22}.}
However, these kernels typically involve $n^3-n^2$ quadratic terms.
Additionally, the kernel of the Ising model features a non-constant large coefficient of $2n-4$.
Recently, a permutation generation technique using domain-wall encoding has been introduced~\cite{Codognet22}.
This technique employs a matrix of size $n\times (n-1)$, where each row $i$ ($0\leq i\leq n-1$) stores the $i$-th number as a domain-wall vector~\cite{Chancellor19,Chen21,Berwald22}.
By utilizing this technique, the number of quadratic terms is reduced to ${1\over 2}n^3-{3\over 2}n$, which is half the number of quadratic terms in QUBO/Ising kernels obtained through conventional one-hot encoding.
This technique still involves a cubic number of quadratic terms.
Moreover, QUBO/Ising models using this domain-wall encoding require coefficients with large absolute values, namely $2n-3$/$n-1$, respectively.

The main contribution of this paper is to introduce a novel permutation encoding technique called \emph{the dual-matrix domain wall},
which uses two matrices $A$ and $B$ to store dual permutations.
This new technique can significantly reduce the number of quadratic terms and the maximum absolute coefficient values in the resulting Ising kernel.
The QUBO/Ising kernel obtained by this technique has only $6n^2-12n+4$ quadratic terms.
Also, the maximum absolute value of the coefficients of the Ising kernel is just $2$.

\begin{figure}[H]
\includegraphics[scale=0.6]{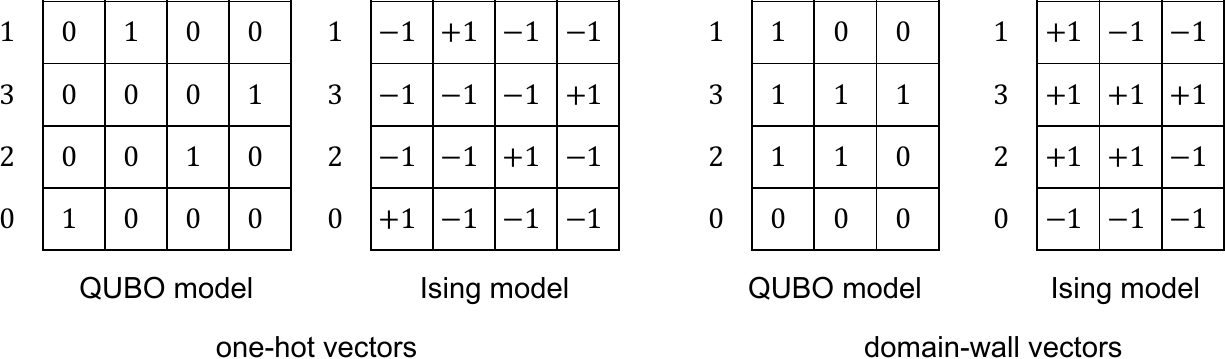}
\caption{Representation of a permutation $[1,3,2,0]$ by a $4\times 4$ matrix with one-hot vectors and a $4\times 3$ matrix
with domain-wall vectors}
\label{fig:onehot}
\end{figure}

We also discuss partial-permutation generation by QUBO/Ising kernels,
which is a sequence of $m$ numbers selected from $n$ numbers 0, 1, $\ldots$, $n-1$ without repetition.
It is not difficult to apply one-hot encoding to generate a partial permutation~\cite{Calude17,Yoshimura21}.
For partial-permutation generation, one uses 
an $m\times n$ matrix such that each $i$-row  \mbox{($0\leq i\leq n-1$)} 
stores the $i$-th selected number as a one-hot vector.
QUBO/Ising kernels that use one-hot encoding to generate any such matrices need ${1\over 2}m^2n+{1\over 2}mn^2-mn$ quadratic terms.
We show that we can apply the dual-matrix domain-wall technique for partial-permutation generation,
and the resulting QUBO/Ising models have only $6mn-6m-6n+4$ quadratic terms.

Moreover, we introduce a generic problem called the particle placement problem (PPP), 
which aims to optimize the placement of $m$ particles in $n$ locations ($m\leq n$) with no collision.
In other words, the problem is to find an optimal permutation of $m$ numbers selected from $n$ numbers.
In this study, we demonstrate that the PPP can be efficiently reduced to a QUBO/Ising model using permutation-generating kernels.
Additionally, we establish that several permutation-based combinatorial optimization problems, such as the quadratic assignment problem (QAP), the traveling salesman problem (TSP), the sub-graph isomorphism problem, and the maximum weight matching problem, can also be reduced to the PPP.
Thus, these problems can be converted to equivalent QUBO/Ising models using permutation-generating kernels.

Finally, we conducted an evaluation of the cells required to embed Ising kernels on a D-Wave quantum annealer, specifically the Advantage 4.1.
The objective was twofold: first, to assess the feasibility of utilizing the dual-matrix domain-wall technique on presently accessible quantum annealers, and second, to examine the influence of the number of quadratic terms in Ising models on the cell requirements.

This paper is organized as follows.
In Section~\ref{sec:pre}, we begin by providing a formal definition of QUBO and Ising models. We also establish the relationship between quantum annealers and Ising models.
Additionally, we explore different encoding techniques such as one-hot, zero-one-hot, and domain-wall encoding for representing numbers in QUBO and Ising models.
Section~\ref{sec:conventional} reviews the conventional one-hot encoding technique used to represent permutations of $n$ numbers.
Furthermore, we introduce the all-different domain-wall encoding technique, which effectively reduces the number of quadratic terms by half~\cite{Codognet22}.
However, both of these techniques require a cubic number of quadratic terms.
In Section~\ref{sec:dual-matrix}, we present a novel encoding technique called \emph{{dual-matrix domain-wall encoding}
}.
This technique enables the generation of permutations using a QUBO/Ising kernel with only a quadratic number of quadratic terms.
{Section}~\ref{sec:partial} 
generalizes the concept of permutations to partial permutations, which represent a partial permutation of $m$ numbers selected from a set of $n$ numbers without repetition.
We demonstrate how QUBO/Ising kernels can generate partial permutations using both the conventional one-hot encoding technique and our dual-matrix domain-wall encoding technique.
In Section~\ref{sec:extended}, we extend the dual-matrix domain-wall technique by incorporating a matrix called a one-hot matrix that stores a partial permutation as a one-hot encoding.
We introduce the particle placement problem (PPP) in Section~\ref{sec:ppp}, highlighting its ability to reduce many permutation-based combinatorial optimization problems.
We evaluate the number of quadratic terms in the resulting QUBO/Ising models obtained through reduction via the PPP for each problem.
Section~\ref{sec:embedding} presents an analysis of the number of cells required to embed Ising kernels for generating permutations in a D-Wave quantum annealer.
Finally, Section~\ref{sec:concl} concludes \mbox{our work.}

\section{Preliminaries}\label{sec:pre}
The main objective of this section is to introduce the concepts of QUBO and Ising models, as well as the fundamental aspects of their design.
To begin with, we provide a formal definition of Ising and QUBO models.
Subsequently, we establish the connections between Ising models and quantum annealers, offering valuable insights to the reader.
Lastly, we showcase the applications of QUBO and Ising models in generating one-hot, zero-one-hot, and domain-wall vectors.

\subsection{QUBO and Ising Models}
\emph{{A Binary Quadratic Model (BQM)}}~\cite{dimod-BQM} is defined as an objective function with a quadratic formula involving multiple variables.
The goal of a BQM is to determine the variable values that minimize the resulting value of the quadratic formula.

A model is referred to as \emph{a Quadratic Unconstrained Binary Optimization (QUBO) model}~\cite{Kochenberger14} if the variables can take \emph{bit (or binary)} values of 0 or 1.
Specifically, let $X=(x_i)$ ($0\leq i\leq n-1$) represent an $n$-bit vector of binary variables.
A QUBO model with $X$ can be defined using an upper triangular weight matrix $W=(W_{i,j})$ ($0\leq i\leq j\leq n-1$).
The objective of the QUBO problem is to find the binary values of $X$ that minimize the energy $E(X)$, which is defined by the following quadratic formula:
\begin{align}
E(X)  &= \sum_{i=0}^{n-1}\sum_{j=i+1}^{n-1}W_{i,j}x_ix_j+\sum_{i=0}^{n-1}W_{i,i}x_i+C, \label{ex:1}\\
        &= \sum_{i=0}^{n-1}\sum_{j=i}^{n-1}W_{i,j}x_ix_j+C\label{ex:2}.
\end{align}

Here, $C$ is a constant called the {\em offset}.
Note that Equations~(\ref{ex:1}) and (\ref{ex:2}) are equivalent because $x_i^2=x_i$ always holds.
While most papers use the energy $E(X)$ in Equation~(\ref{ex:2}) with no offset $C$, 
this paper adopts the energy $E(X)$ defined by Equation~(\ref{ex:1}) to distinguish linear and quadratic terms with coefficients
$W_{i,i}$ ($0\leq i\leq n-1$) and $W_{i,j}$ ($0\leq i<j\leq n-1$), respectively.

A BQM is called \emph{an Ising model} if variables take \emph{qubit (or spin)} values of $-1$ or $+1$.
An Ising model with an $n$-qubit vector $S=(s_i)$ ($0\leq i\leq n-1$) is defined
by quadratic term coefficients $J=(J_{i,j})$ ($0\leq i<j\leq n-1$) called \emph{interactions} and linear term coefficients $h=(h_i)$ ($0\leq i\leq n-1$)
called \emph{biases}.
The Ising problem aims to find the qubit values of $S$ that minimize \emph{the Hamiltonian} $H(S)$, which is defined by the following quadratic formula:
\begin{align}
H(S) &= \sum_{i=0}^{n-2}\sum_{j=i+1}^{n-1}J_{i,j}s_is_j+\sum_{i=0}^{n-1}h_is_i+C.
\end{align}

We make the assumption that all coefficients in the linear and quadratic terms of QUBO/Ising models are integers, unless otherwise specified.
Using integers with large absolute values can lead to discrete values below the effective resolution, so it is important to keep the maximum absolute value of all coefficients as small as possible.
To achieve this, we design QUBO/Ising models that have no common factor in all coefficients.
This allows us to reduce the coefficients by dividing the common factor without affecting the optimal solutions.
It is worth mentioning that the offset $C$ does not influence the optimal solution and can be disregarded. Consequently, it can take a non-integer value.

We define the number of non-zero elements in $W_{i,i}$/$h_i$ ($0\leq i\leq n-1$) as \emph{the linear term count}
and that in $W_{i,j}$/$J_{i,j}$  ($0\leq i<j\leq n-1$) as \emph{the quadratic term count} of a QUBO/Ising model.
Generally, having smaller term counts is advantageous as it helps reduce the hardware resource usage of quantum computers when solving  problems of QUBO/Ising models.
In particular, the quadratic term count has a significant impact on the hardware resource usage of quantum annealers when solving Ising problems.
By minimizing the quadratic term counts, we can effectively reduce the amount of resources needed for implementing and solving QUBO/Ising models on quantum hardware.
This optimization is essential for improving the efficiency and performance of quantum computing systems.

It is easy to see that QUBO and Ising models can be equivalently converted to each other \cite{Tao20}.
Both QUBO and Ising models, excluding the offset $C$, can be represented as weighted undirected graphs
with $n$ nodes $0, 1, \ldots, n-1$
such that $W_{i,i}$/$h_i$ is the weight of node $i$ and
$W_{i,j}$/$J_{i,j}$ is the weight of edge $(i,j)$ of the graph.
Figure~\ref{fig:qubo-ising} shows a graph corresponding to a QUBO model with the
the following energy:
\begin{align}
E(X) &= 2x_0x_1-x_0x_3-2x_0x_4-x_1x_2+x_1x_5+3x_2x_5-2x_3x_4-2x_4x_5 \nonumber \\
       & \qquad   -2x_0-x_1+2x_2+4x_5
\end{align}

The figure also illustrates the equivalent Ising model with the following Hamiltonian:
\begin{align}
H(S) &=  2s_0s_1-s_0s_3-2s_0s_4  -s_1s_2+s_1s_5+3s_2s_5-2s_3s_4-2s_4s_5\nonumber \\
 & \qquad -5s_0+6s_2-3s_3-6s_4+10s_5
\end{align}

{We omit terms with zero coefficients in the formulas and edges with zero weights in the graphs.} 
As QUBO and Ising models are represented as graphs, we use graph theory terminologies such as the degree of a node (i.e., the number of edges connecting to a node) and the diameter (i.e., the largest shortest path over all pairs of nodes).
The QUBO and Ising models have optimal solutions $X=[1, 0, 0, 1, 1, 0]$ with energy $E(X)=-7$
and $S=[+1,-1,-1,+1,+1,-1]$ with Hamiltonian $H(S)=-32$, respectively.
These models are equivalent, because $4E(X)=H(S)+4$ always holds
for all $X$ and $S$ satisfying $s_i=2x_i-1$ for all $i$ ($0\leq i\leq n-1$).

\begin{figure}[H]
\includegraphics[scale=0.7]{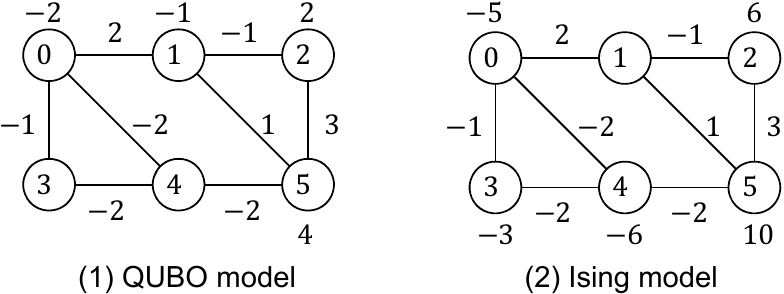}
\caption{Equivalent QUBO and Ising models.}
\label{fig:qubo-ising}
\end{figure}

As many optimization problems can be reduced to QUBO/Ising models~\cite{Lucas14}, there has been a significant effort by researchers to find effective algorithms, hardware, and systems to solve them.
Digital computers and devices can directly operate 0/1 bits, and users/developers can handle them more easily than $-1$/$+1$ qubits.
As a result, QUBO models are more frequently used than Ising models, and QUBO solvers have been developed by many researchers.
On the other hand, quantum annealers based on quantum mechanics can directly operate $-1$/$+1$ qubits in Ising models.
Therefore, for solving QUBO problems on quantum annealers, they are converted to equivalent Ising models.
By this preprocessing step, quantum annealers can be used as QUBO solvers.
However, to maximize the performance of QUBO solvers, we may design Ising models without using QUBO--Ising conversion.

When a QUBO/Ising model for solving a specific permutation-based combinatorial optimization problem is designed,
this involves creating a sub-model that generates any one of all possible permutations as the optimal solution.
This sub-model is referred to as \emph{the QUBO/Ising kernel}.
The main focus of this paper is the design of QUBO/Ising kernels for generating permutations.
For instance, we will present QUBO/Ising kernels
for generating an $n\times n$-matrix that stores one-hot vectors representing a permutation, as shown in Figure~\ref{fig:onehot}.

\subsection{Quantum Annealers and Ising Models}
D-Wave Systems developed a quantum annealer called D-Wave 2000Q~\cite{McGeoch19}, which was based on quantum mechanics.
 It served as a solver for Ising models using a \mbox{2048-node} Chimera graph.
 Subsequently, D-Wave Systems released a more advanced quantum annealer known as D-Wave Advantage~\cite{Advantage},
 which was capable of handling Ising models with a larger 5760-node Pegasus graph~\cite{DWaveAdvantage19}.
 Specifically, the D-Wave Advantage quantum annealer comprises 5760 cells interconnected according to the topology of a 5760-node Pegasus graph.
The cell biases and interaction strengths are programmable, allowing for the acquisition of optimal or approximate solutions to the corresponding Ising models through quantum annealing.

While a quantum annealer is designed to solve Ising models with a specific graph topology, it is possible to solve Ising models with different topologies through a process known as minor embedding.
Minor embedding involves mapping the problem graph (with a different topology) onto the physical graph of the quantum annealer, effectively embedding the problem into the hardware.
This allows the quantum annealer to solve Ising models that may not directly match its native graph topology.
Minor embedding is a technique used to leverage the capabilities of quantum annealers for a broader range of Ising models.
We explain the idea of minor embedding using Figure~\ref{fig:annealer}.
Suppose that we need to solve an Ising problem with the six-node graph in Figure~\ref{fig:qubo-ising}
on a quantum annealer with an eight-cell grid topology, as shown in Figure~\ref{fig:annealer}.
We arrange node $0$ (or qubit $s_0$) in Figure~\ref{fig:qubo-ising} to two cells $0$ (or qubit $s_0$) and $0'$ (or qubit $s_{0'}$) in Figure~\ref{fig:annealer}.
We add a quadratic term $-Ps_{0}s_{0'}$ so that optimal solutions satisfy $s_{0}=s_{0'}$,
where $P$ is a constant number large enough to guarantee it.
By virtue of this embedding, $s_0$ in Figure~\ref{fig:qubo-ising} can be simulated by $s_0$ and $s_{0'}$ combined in Figure~\ref{fig:annealer}.
The reader should have no difficulty in confirming that the Ising model in Figure~\ref{fig:qubo-ising} can be solved
by the quantum annealer in Figure~\ref{fig:annealer}.

\begin{figure}[H]
\includegraphics[scale=0.7]{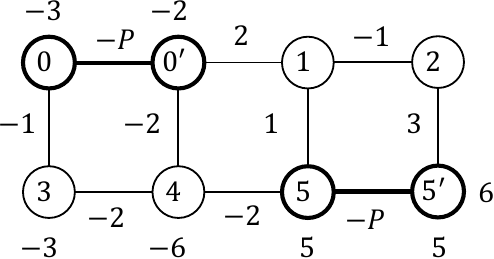}
\caption{Example of minor embedding for the Ising model of Figure~\ref{fig:qubo-ising} to the quantum annealer.}
\label{fig:annealer}
\end{figure}

More than two cells in the quantum annealer may be arranged to simulate a node of the Ising model.
D-Wave Systems calls a set of cells arranged to form a qubit {\em a chain}, and
the value of $P$ the {\em chain strength}~\cite{DWave-chain}.
To obtain an optimal or approximate solution of an Ising model by the D-Wave Advantage quantum annealer,
one needs to find its minor embedding in a 5760-node Pegasus graph.
If an Ising model is a dense graph with a large degree, the chains will be large.
For example, if an Ising model is a complete graph, only 177 nodes
can be embedded in the D-Wave Advantage quantum annealer~\cite{Advantage}.
Hence, the size of an Ising model that can be solved by the D-Wave Advantage quantum annealer is limited.
In particular, the quadratic term count of an Ising model impacts the resource usage of cells.
Thus, it is quite important to minimize the quadratic term count when one designs Ising models.

The D-Wave Advantage quantum annealer~\cite{Advantage} can take real numbers
for the interaction $J_{i,j}$ and bias $h_{i}$ of Ising models.
The ranges of $J_{i,j}$ and $h_{i}$ are limited to $[-1.0,+1.0]$ and $[-4.0,+4.0]$, respectively.
Although they take any real number in these ranges, they are operated as analog values, 
and the effective resolution is limited.
The resolution is only five to six bits, and two values with a difference less than
${1\over 2^6}=0.015$ may not be distinguished due to the flux noise.
Recall that we assume the interaction $J_{i,j}$ and bias $h_{i}$ of Ising models are integers.
When we load such Ising models onto the D-Wave Advantage quantum annealer,
these integer values are automatically reduced to fit in the ranges $[-1.0,+1.0]$ and $[-4.0,+4.0]$.
For example, if $J_{i,j}$ takes integers in the range $[-100,+100]$,
then they will be divided by 100 to fit in the range $[-1.0,+1.0]$, and 
the required resolution $0.01$ is smaller than the effective resolution of the D-Wave Advantage quantum annealer.
Hence, optimal solutions of the Ising model cannot be expected.
Since the resolution is limited, the maximum absolute values of interactions and biases of Ising models
should be as small as possible to obtain optimal solutions with a higher probability by quantum annealers.

QUBO/Ising models with larger diameters may be solvable using a heuristic algorithm in parallel.
They may be split into many disjoint sub-models, and
parallel divide-and-conquer 
techniques can be applied to such models.
Heuristic algorithms to improve the solutions of sub-models can be executed in parallel, and better approximate
solutions can be obtained more quickly.
On the other hand, smaller-diameter models cannot be split into many disjoint sub-models,
and it is difficult to apply parallel heuristic techniques.
Thus, larger-diameter models are preferable to solve the problems in parallel.

\subsection{QUBO/Ising Models for One-Hot/Zero-One-Hot/Domain-Wall Vectors}

\emph{One-hot encoding} has often been used to represent integers in QUBO and Ising models.
A $k$-bit/qubit vector is \emph{a one-hot vector} if it has exactly one $1$/$+1$,
and the remaining $k-1$ bits/qubits take $0$/$-1$.
It represents an integer $i$ ($0\leq i\leq k-1$) if and only if the $i$-th bit/qubit is $1$/$+1$.
For technical reasons, we introduce \emph{the zero-one-hot vector},
which can take a bit/qubit vector with all $0$/$-1$s in addition to one-hot vectors.
Such vectors with all $0$/$-1$s represent a special value $\varphi$ associated with ``undefined'' or ``N/A''. Table~\ref{tab:vectors} shows four-bit/qubit one-hot/zero-one-hot vectors 
with corresponding integers or $\varphi$.

\begin{table}[H]
\tablesize{\footnotesize}
\caption{{One-hot/zero-one-hot/domain-wall} 
 vectors with $k=4$ bits.}
\label{tab:vectors}
		\setlength{\cellWidtha}{\textwidth/9-2\tabcolsep-0.4in}
		\setlength{\cellWidthb}{\textwidth/9-2\tabcolsep+0in}
		\setlength{\cellWidthc}{\textwidth/9-2\tabcolsep+0.4in}
		\setlength{\cellWidthd}{\textwidth/9-2\tabcolsep-0.4in}
		\setlength{\cellWidthe}{\textwidth/9-2\tabcolsep+0in}
		\setlength{\cellWidthf}{\textwidth/9-2\tabcolsep+0.4in}
		\setlength{\cellWidthg}{\textwidth/9-2\tabcolsep-0.4in}
		\setlength{\cellWidthh}{\textwidth/9-2\tabcolsep+0in}
		\setlength{\cellWidthi}{\textwidth/9-2\tabcolsep+0.4in}
		\scalebox{1}[1]{\begin{tabularx}{\textwidth}{>{\centering\arraybackslash}m{\cellWidtha}>{\centering\arraybackslash}m{\cellWidthb}>{\centering\arraybackslash}m{\cellWidthc}>{\centering\arraybackslash}m{\cellWidthd}>{\centering\arraybackslash}m{\cellWidthe}>{\centering\arraybackslash}m{\cellWidthf}>{\centering\arraybackslash}m{\cellWidthg}>{\centering\arraybackslash}m{\cellWidthh}>{\centering\arraybackslash}m{\cellWidthi}}
\toprule
\multicolumn{3}{c}{\textbf{One-Hot}} & \multicolumn{3}{c}{\textbf{Zero-One-Hot}} &\multicolumn{3}{c}{\textbf{Domain-Wall}} \\
  &  \textbf{Bits} & \textbf{Qubits} & &\textbf{Bits} & \textbf{Qubits} & & \textbf{Bits} & \textbf{Qubits} \\
\midrule
0          & $[1,0,0,0]$ & $[+1,-1,-1,-1]$ & 0 &  $[1,0,0,0]$ & $[+1,-1,-1,-1]$ & 0 &$[0,0,0,0]$ & $[-1,-1,-1,-1]$ \\
1          & $[0,1,0,0]$ & $[-1,+1,-1,-1]$ & 1 & $[0,1,0,0]$ & $[-1,+1,-1,-1]$ & 1 &$[1,0,0,0]$ & $[+1,-1,-1,-1]$\\
2          & $[0,0,1,0]$ & $[-1,-1,+1,-1]$ & 2 & $[0,0,1,0]$ & $[-1,-1,+1,-1]$ & 2 &$[1,1,0,0]$ & $[+1,+1,-1,-1]$\\
3          & $[0,0,0,1]$ & $[-1,-1,-1,+1]$ & 3 & $[0,0,0,1]$ & $[-1,-1,-1,+1]$ & 3 & $[1,1,1,0]$ &$[+1,+1,+1,-1]$ \\
           &                &                   & $\varphi$ & $[0,0,0,0]$ & $[-1,-1,-1,-1]$ & 4 & $[1,1,1,1]$ & $[+1,+1,+1,+1]$\\
\bottomrule
\end{tabularx}}
\end{table}

\emph{Domain-wall encoding}~\cite{Codognet22,Chancellor19,Chen21,Berwald22} has also been used to represent integers.
A $k$-bit/qubit vector is {\em a domain-wall vector} 
 if it consists of consecutive $1$/$+1$s following consecutive $0$/$-1$s.
It represents the integer $i$ ($0\leq i\leq k$) if it contains $i$ consecutive $1$s.
Therefore, it can represent $k+1$ integers ranging from 0 to $k$. Table~\ref{tab:vectors} illustrates four-bit/qubit domain-wall vectors that represent integers from 0 to 4.

This sub-section presents QUBO/Ising models for one-hot/zero-one-hot/domain-wall vectors,
aimed at helping the reader understand the fundamental concepts of using these models to represent numbers.
These models are designed to achieve their optimal value if and only if the vectors $X$/$S$ are one-hot/zero-one-hot/domain-wall vectors.

We will first design a QUBO model with $k$-bit variable $X=(x_i)$ ($0\leq i\leq k-1$)
that takes the minimum value of 0 if and only if it stores a $k$-bit one-hot vector. 
It is clear that $X$ is a one-hot vector if and only if the sum of all bits is 1.
Based on this property, we can design a QUBO model with an energy function $E_1(X)$ as follows:
\begin{align}
E_1(X) &= \left(1-\sum\limits_{i=0}^{k-1}x_i\right)^2 = \textcolor{black}{2\sum\limits_{i=0}^{k-2}\sum\limits_{j=i+1}^{k-1}x_ix_j}-\sum\limits_{i=0}^{k-1}x_i+1.
       \label{eq:QUBO-1hot}
\end{align}

It is evident that $E_1(X)$ takes the minimum value of 0 if and only if $X$ is a one-hot vector.
Additionally, we can design a QUBO model that takes the minimum value if and only if it stores a $k$-bit zero-one-hot vector.
A $k$-bit vector $X$ is a zero-one-hot vector if and only if the sum of all bits is either 0 or 1.
Thus, the energy $E_{01}(X)$ defined below takes the minimum value of 0 if $X$ is a zero-one-hot vector.
\begin{align}
E_{01}(X) &= {1\over 2}\sum\limits_{i=0}^{k-1}x_i\left(1-\sum\limits_{i=0}^{k-1}x_i\right) = \textcolor{black}{\sum\limits_{i=0}^{k-2}\sum\limits_{j=i+1}^{k-1}x_ix_j}.
       \label{eq:QUBO-01hot}
\end{align}

QUBO models with $E_{01}(X)$ for zero-one-hot vectors have no linear terms, making them simpler than those with $E_1(X)$ for one-hot vectors.
{Since all quadratic terms in the expanded summation are 2, $E_{01}(X)$ has a coefficient of $1\over 2$.}
Our objective is to minimize the integer coefficients of both linear and quadratic terms.
Consequently, we will employ such coefficients whenever possible throughout the remainder of this paper.

Similarly, we can design Ising models with $H_1(S)$ and $H_{01}(S)$
for a $k$-qubit variable $S=(s_i)$ ($0\leq i\leq k-1$) storing one-hot and zero-one-hot vectors, respectively.
A $k$-qubit variable $S$ stores a one-hot vector if and only if
$\sum\limits_{i=0}^{k-1}s_i=1\cdot (+1)+(k-1)\cdot(-1)=-(k-2)$.
Thus, the following $H_1(S)$ takes the minimum value of 0 if and only if $S$ is a one-hot vector:
\begin{align}
H_1(S) &= {1\over 2}\left((k-2)+\sum\limits_{i=0}^{k-1}s_i\right)^2   = \textcolor{black}{\sum\limits_{i=0}^{k-2}\sum\limits_{j=i+1}^{k-1}s_is_j}+ (k-2)\sum\limits_{i=0}^{k-1}s_i +{1\over 2}k^2-{3\over 2}k+2.
\label{eq:Ising-1hot}
\end{align}

Clearly, $H_1(S)$ takes the minimum value of 0 if and only if $S$ is a one-hot vector.
Additionally, $S$ stores a zero-one-hot vector if and only if it has zero or one $+1$.
Thus, the following Ising model with $H_{01}(S)$ takes the minimum value of 0 if and only if $S$ is a zero-one-hot vector, which means that the sum of all qubits is $-k$ or $-(k-2)$:
\begin{align}
H_{01}(S) &= {1\over 2}\left(k+\sum\limits_{i=0}^{k-1}s_i\right)\left((k-2)+\sum\limits_{i=0}^{k-1}s_i\right)  = \textcolor{black}{\sum\limits_{i=0}^{k-2}\sum\limits_{j=i+1}^{k-1}s_is_j+ (k-1)\sum\limits_{i=0}^{k-1}s_i} +{1\over 2}k^2-{1\over 2}k.
\label{eq:Ising-01hot}
\end{align}

We will now design a QUBO model for a $k$-bit variable $X = (x_i)$ ($0 \leq i \leq k-1$) that stores a domain-wall vector.
For technical reasons, we assume the existence of fixed \emph{guard bits} $x_{-1} = 1$ and $x_k = 0$ for $X$.
With these guard bits, if $X$ stores a domain-wall vector, then $x_{i-1} - x_i \neq 0$ ($0 \leq i \leq k$) holds for exactly one $i$.
Otherwise, it holds for more than two $i$.
Based on this fact, we have the following QUBO model:
\begin{align}
E_d(X) &= {1\over 2}\sum_{i=0}^{n-1} (x_{i-1}-x_{i})^2 = - \sum\limits_{i=1}^{k-1}x_{i-1}x_i+\sum\limits_{i=1}^{k-1}x_i +{1\over 2}.
\label{eq:QUBO-domain}
\end{align}

The energy function $E_d(X)$ takes the minimum value of $1\over 2$ if and only if $(x_{i-1}-x_{i})^2=1$ for exactly one $i$ and $X$ stores a domain-wall vector.
Similarly, we can design an Ising model for a $k$-qubit variable $S=(s_i)$ ($0\leq i\leq k-1$) that stores a domain-wall vector using the same approach.
We also assume the existence of fixed \emph{guard qubits} $s_{-1}=+1$ and $s_{k}=-1$.
The following Ising model with $H_d(S)$ takes the minimum value of 2
if and only if $S$ stores a domain-wall vector and $(s_{i-1}-s_{i})^2=4$ for exactly one $i$:
\begin{align}
H_d(S) &= {1\over 2} \sum_{i=0}^{k} (s_{i-1}-s_i)^2
 = -\sum\limits_{i=1}^{k-1}s_{i-1}s_i - s_0 +s_{k-1} + {(k+1)} 
 \label{eq:Ising-domain}
\end{align}

Table~\ref{tab:1d-stat} summarizes the features of QUBO/Ising models with $k$-bit/qubit one-hot/zero-one-hot/domain-wall vectors.
We utilized SymPy \cite{Sympy}, a Python library for symbolic mathematics, to expand the mathematical formulas and derive the features of the QUBO/Ising models presented throughout this paper.
We observe that QUBO/Ising models for one-hot vectors are fully connected and consist of ${1\over 2}k^2-{1\over 2}k$ quadratic terms.
In contrast, models for domain-wall vectors have only $k-1$ quadratic terms.
Furthermore, the linear term coefficients of Ising models for one-hot and zero-one-hot vectors are $k-2$ and $k-1$, respectively.
On the other hand, those for domain-wall vectors have only two linear terms with coefficients $-1$ and $+1$, respectively.

\begin{table}[H]
\tablesize{\small}
\caption{QUBO/Ising models for $k$-bit/qubit one-hot/domain-wall vectors.} 
\label{tab:1d-stat}
		\setlength{\cellWidtha}{\textwidth/4-2\tabcolsep+0.9in}
		\setlength{\cellWidthb}{\textwidth/4-2\tabcolsep-0.3in}
		\setlength{\cellWidthc}{\textwidth/4-2\tabcolsep-0.3in}
		\setlength{\cellWidthd}{\textwidth/4-2\tabcolsep-0.3in}
		\scalebox{1}[1]{\begin{tabularx}{\textwidth}{>{\raggedright\arraybackslash}m{\cellWidtha}>{\centering\arraybackslash}m{\cellWidthb}>{\centering\arraybackslash}m{\cellWidthc}>{\centering\arraybackslash}m{\cellWidthd}}
		\toprule
\textbf{Encoding Type}  & \textbf{One-Hot} & \textbf{Zero-One-Hot} & \textbf{Domain-Wall} \\
 \midrule
 QUBO models &\\
Quadratic formula & $E_1(X)$& $E_{01}(X)$ & $E_{d}(X)$ \\
Linear term count & $k$ & $0$ &  $k-1$\\
Linear term coefficients & $-1$ & - & $+1$\\
Quadratic term count & ${1\over 2}k^2-{1\over 2}k$ & ${1\over 2}k^2-{1\over 2}k$ & {$k-1$} 
\\
Quadratic term coefficients & $+2$ & $+1$ & $-1$ \\
Diameter & $1$ &$1$ & $k-1$ \\
Offset & $1$ & $0$ & $1\over 2$\\
Optimal energy &$0$ & $0$ & $1\over 2$\\
 \midrule
Ising models &\\
Quadratic formula & $H_1(S)$ & $H_{01}(S)$ & $H_{d}(S)$ \\
Linear term count & $k$ & $k$ &  $2$\\
Linear term coefficients & $k-2$ & $k-1$ & {$-1$, $+1$}\\ 
Quadratic term count & ${1\over 2}k^2-{1\over 2}k$ & ${1\over 2}k^2-{1\over 2}k$ & {$k-1$}\\ 
Quadratic term coefficients & $+1$ & $+1$ & $-1$ \\
Diameter & $1$ &$1$ & $k-1$ \\
Offset & ${1\over 2}k^2-{3\over 2}k+2$ & ${1\over 2}k^2-{1\over 2}k$ & $k+1$\\
Optimal Hamiltonian &$0$ & $0$ & $2$\\
\bottomrule
\end{tabularx}}
\end{table}

\section{QUBO/Ising Model Kernels for Generating Permutation Involving a Cubic Number of Quadratic Terms}
\label{sec:conventional}
A permutation of $n$ numbers can be defined by a bijection $\pi:\{0,1,\ldots, n-1\}\rightarrow \{0,1,\ldots, n-1\}$, where the list $[\pi(0), \pi(1), \ldots, \pi(n-1)]$ represents one of the $n!$ possible permutations.
This section initially describes a conventional permutation encoding method that employs one-hot vectors.
This approach is widely used for solving permutation-based combinatorial optimization problems, not just in QUBO/Ising models but also in mixed-integer programming.
We then explain a permutation encoding method using domain-wall vectors, which was introduced in~\cite{Codognet22} and reduces the number of quadratic terms by half.

\subsection{Conventional Permutation Encoding by One-Hot Vectors}
A permutation $\pi$ is commonly represented by an $n\times n$ matrix of variables, where each row $i$ ($0 \leq i \leq n-1$) stores $\pi(i)$ as a one-hot vector.
This representation ensures that each row of the matrix is a one-hot vector.
We will design $n^2$-bit QUBO/Ising kernels with $X=(x_{i,j})$/$S=(s_{i,j})$ ($0\leq i,j\leq n-1$),
which can generate permutations as the optimal solutions.
For this purpose, we apply $E_1(X)$/$H_1(S)$ to all rows and columns
to guarantee that they are one-hot vectors as follows:
\begin{align}
E_1^{nn}(X) &= 
 {1\over 2}\sum_{i=0}^{n-1} \left(1-\sum\limits_{j=0}^{n-1}x_{i,j}\right)^2+
 {1\over 2}\sum_{j=0}^{n-1} \left(1-\sum\limits_{i=0}^{n-1}x_{i,j}\right)^2
       \label{eq:QUBO-1hot-nn}
\end{align}
\begin{align}
H_1^{nn}(S) &= 
 {1\over 2}\sum_{i=0}^{n-1}\left((n-2)+\sum\limits_{j=0}^{n-1}s_{i,j}\right)^2+
 {1\over 2}\sum_{j=0}^{n-1}\left((n-2)+\sum\limits_{i=0}^{n-1}s_{i,j}\right)^2
              \label{eq:Ising-1hot-nn}
\end{align}

We can obtain the quadratic formulas for QUBO and Ising models by expanding $E_1^{nn}(X)$ and $H_1^{nn}(S)$, respectively.
Both formulas take the minimum value of 0 if and only if every row and every column is a one-hot vector.
The features of QUBO/Ising models obtained by expanding $E_1^{nn}(X)$/$H_1^{nn}(S)$ can be found in Table~\ref{tab:permutation-stat}.
It is important to note that both models have $n^3-n^2$ quadratic terms, and the Ising model also includes linear terms with a coefficient of $2n-4$.

\begin{table}[H]
\caption{QUBO/Ising kernels for generating a permutation of $n$ numbers.} 
\label{tab:permutation-stat}
	\begin{adjustwidth}{-\extralength}{0cm} 
		\setlength{\cellWidtha}{\fulllength/5-2\tabcolsep+0.3in}
		\setlength{\cellWidthb}{\fulllength/5-2\tabcolsep-0.2in}
		\setlength{\cellWidthc}{\fulllength/5-2\tabcolsep+0.2in} 
		\setlength{\cellWidthd}{\fulllength/5-2\tabcolsep-0.1in}
				\setlength{\cellWidthd}{\fulllength/5-2\tabcolsep-0.2in}
		\scalebox{1}[1]{\begin{tabularx}{\fulllength}{>{\raggedright\arraybackslash}m{\cellWidtha}>{\centering\arraybackslash}m{\cellWidthb}>{\centering\arraybackslash}m{\cellWidthc}>{\centering\arraybackslash}m{\cellWidthd}>{\centering\arraybackslash}m{\cellWidthd}}
			\toprule
                      & & \multicolumn{3}{c}{\textbf{Domain-Wall}} \\
\textbf{Encoding Type}  & \textbf{One-Hot} & \textbf{All-Different} & \textbf{Dual-Matrix} & \textbf{Extended} \\
 \midrule
Bit/qubit count & $n^2$ & $n^2-n$ & $2n^2-2n$ & $3n^2-2n$  \\
 \midrule
 QUBO models &&&&\\
Quadratic formula & $E_1^{nn}(X)$& $E_a^{nn}(X)$ & $E_d^{nn}(A,B)$ & {$E_e^{nn}(A,B,X)$}\\ 
Linear term count &  $n^2$ & $n^2-2n$ & $2n^2-4n$ & $3n^2-6n+2$ \\
Linear term coefficients & $-1$ & {$-(2n-3), -(2n-6), \ldots, -2$} & $+2$ & $-1,+1,+2$\\ 
Quadratic term count &  {$n^3-n^2$} 
 & {${1\over 2}n^3-{3\over 2}n$}& {$6n^2-12n+4$} & {$6n^2-8n$}\\  
Quadratic term coefficients & $+1$ & $-1, +2$ & $-2,-1,+1$ & $-2,-1,+1$\\
Diameter &  $2$ & $n-1$ & $2n-3$ & $2n$ \\
Offset & $n$  & ${1\over 3}n^3-{1\over 2}n^2-2n$ & $2n-1$ & $2n$\\
Optimal energy & $0$ & ${1\over 2}n$ & $n$ & $n$ \\
 \midrule
Ising models &&&&\\
Quadratic formula &  $H_1^{nn}(S)$ &$H_a^{nn}(S)$ & $H_d^{nn}(A,B)$ & {$H_e^{nn}(A,B,S)$}\\ 
Linear term count & $n^2$ & $n^2 + n(n\bmod 2) - 2n$  &$4n$ & $n^2+4n-4$ \\
Linear term coefficients & {$2n-4$} & {$-(n-1), -(n-4),\ldots, +(n-1)$} & {$-2,+2$} & {$-2,+1,+2$} \\ 
Quadratic term count & {$n^3-n^2$} & {${1\over 2}n^3-{3\over 2}n$} & {$6n^2-12n+4$} & {$6n^2-8n$} \\ 
Quadratic term coefficients &  $+1$ & $-1,+1$& $-2,-1,+1$ & $-2,-1,+1$ \\
Diameter &  $2$ & $n-1$ & $2n-3$ & $2n$ \\
Offset & $n^3-3n^2+4n$ & ${1\over 6}n^3 + n^2- {1\over 6}n$&$4n^2-4$ &$6n^2-4n$ \\
Optimal Hamiltonian & $0$ & $2n$ & $4n$ & $4n$\\
 \bottomrule
\end{tabularx}}
\end{adjustwidth}
\end{table}

\subsection{Permutation Encoding by All-Different Domain-Wall Encoding}
We will now introduce \emph{the all-different domain-wall technique}, which was presented in~\cite{Codognet22} and can be used to generate permutations
with domain-wall vectors.
This technique utilizes a matrix $X=(x_{i,j})$ of size $n\times (n-1)$, where each row $i$ ($0\leq i\leq n-1$) stores the domain-wall vector representing $\pi(i)$ for a permutation $\pi$. 
Figure~\ref{fig:onehot} illustrates an example with $n=4$.
It is important to note that a matrix $X$, where each row contains a domain-wall vector, represents a permutation if and only if all the domain-wall vectors are distinct.
The all-different domain-wall technique leverages this property to design a QUBO kernel for generating a permutation, as follows:
\begin{align}
E_a^{nn}(X) &= 
 {1\over 2} \sum_{i=0}^{n-1}\sum_{j=0}^{n-1} (x_{i,j-1}-x_{i,j})^2+
\sum_{j=0}^{n-2}\left((n-j-1)-\sum\limits_{i=0}^{n-1}x_{i,j}\right)^2
       \label{eq:QUBO-all-different}
\end{align}

Here, fixed guard bits $x_{i,-1}=1$ and $x_{i,n-1}=0$ for all $i$ ($0\leq i\leq n-1$) are used.
The first summation term takes the minimum value of ${1\over 2}n$ if and only if every row is a domain-wall vector.
Also, the the second summation term takes the minimum value of 0 if and only if the number of 1s in each column $j$ ($0\leq j\leq n-1$) is $n-j-1$.
If this is the case, all rows of $X$ store distinct domain-wall vectors.
Therefore, $X$ is a permutation if and only if it is an optimal solution that satisfies $E_a^{nn}(X)={1\over 2}n$. 

We can apply the same technique for designing an Ising kernel
with an $n\times (n-1)$ matrix $S=(s_{i,j})$ that generates a permutation as a domain-wall vector.
If all rows in $S$ store distinct domain-wall  vectors, then the number of $+1$s in each column $j$ ($0\leq j\leq n-1$) is $(n-j-1)$,
and the sum of all elements in it is $(n-j-1)\dot(+1)+(j+1)\cdot(-1)=n-2j-2$.
Based on this fact, we can design a desired Ising model as follows:
\begin{align}
H_a^{nn}(X) &= 
 {1\over 2} \sum_{i=0}^{n-1}\sum_{j=0}^{n-1} (s_{i,j-1}-s_{i,j})^2+
 {1\over 2} \sum_{j=0}^{n-2}\left((n-2j-2)-\sum\limits_{i=0}^{n-1}s_{i,j}\right)^2
       \label{eq:Ising-all-different}
\end{align}

Here, $s_{i,-1}=+1$ and $s_{i,n-1}=-1$ for all $i$ ($0\leq i\leq n-1$).
The first summation term takes the minimum value of ${1\over 2}\cdot 4n=2n$ if and only if every row is a domain-wall vector.
Also, the second summation term takes 0 if and only if the number of $+1$s in each column $j$ ($0\leq j\leq n-1$) is $n-j-1$.
If this is the case, all rows of $S$ store distinct domain-wall vectors.
Therefore, $S$ stores domain-wall vectors representing a permutation if and only if it is an optimal solution that satisfies $H_a^{nn}(X)=2n$.

Table~\ref{tab:permutation-stat} presents the features of QUBO and Ising models derived from the expansion of mathematical expressions
$E_a^{nn}(X)$ and $H_a^{nn}(S)$.
The number of quadratic terms in these models is halved compared to conventional one-hot encoding.
However, these models include linear terms with significantly large absolute values.
The maximum absolute values for the QUBO and Ising models are $2n-3$ and $n-1$, respectively.

\section{QUBO/Ising Kernels with a Quadratic Number of Quadratic Terms Based on Our Dual-Matrix Domain-Wall Technique}
\label{sec:dual-matrix}
Our new permutation encoding technique utilizes the inverse permutation. 
For a permutation $\pi$ of $n$ numbers, let $\pi^{-1}$ denote the inverse, such that $\pi^{-1}(\pi(i))=i$ for all $i$ ($0\leq i\leq n-1$).
We refer to the permutations $[\pi(0), \pi(1), \ldots, \pi(n-1)]$ and $[\pi^{-1}(0), \pi^{-1}(1), \ldots, \pi^{-1}(n-1)]$ as {\emph{dual permutations}}. 

Recall that in conventional one-hot encoding, each row $i$ ($0 \leq i \leq n-1$) of an $n \times n$ matrix stores $\pi(i)$ as a one-hot vector, as illustrated in Figure~\ref{fig:onehot}. 
It is easy to confirm that each column $j$ ($0 \leq j \leq n-1$) also stores a one-hot vector representing the inverse $\pi^{-1}(j)$. 
Therefore, the row one-hot vectors and column one-hot vectors represent dual permutations.
Figure~\ref{fig:onehot} shows an example of a $4 \times 4$ matrix storing the row permutation $[1, 3, 2, 0]$ and the column permutation $[3, 0, 2, 1]$, which are dual permutations.
Our dual-matrix domain-wall technique was inspired by this fact.
It employs two matrices: matrix $A=(a_{i,j})$ ($0\leq i\leq n-1$ and $0\leq j \leq n-2$) of size $n\times (n-1)$ and matrix $B=(b_{i,j})$
 (${0}\leq i\leq n-2$ and $0\leq j \leq n-1$) of size $(n-1)\times n$. 
These matrices are used to store dual permutations, where the rows of $A$ represent a permutation and the columns of $B$ represent its inverse permutation.
{For a clearer understanding, the reader should refer to Figure~\ref{fig:dual-matrix}, which provides an example for $n=4$ using a $4\times 3$ matrix $A$ and a $3\times 4$ matrix $B$.}

\begin{figure}[H]
	\begin{adjustwidth}{-\extralength}{0cm}
\includegraphics[scale=0.68]{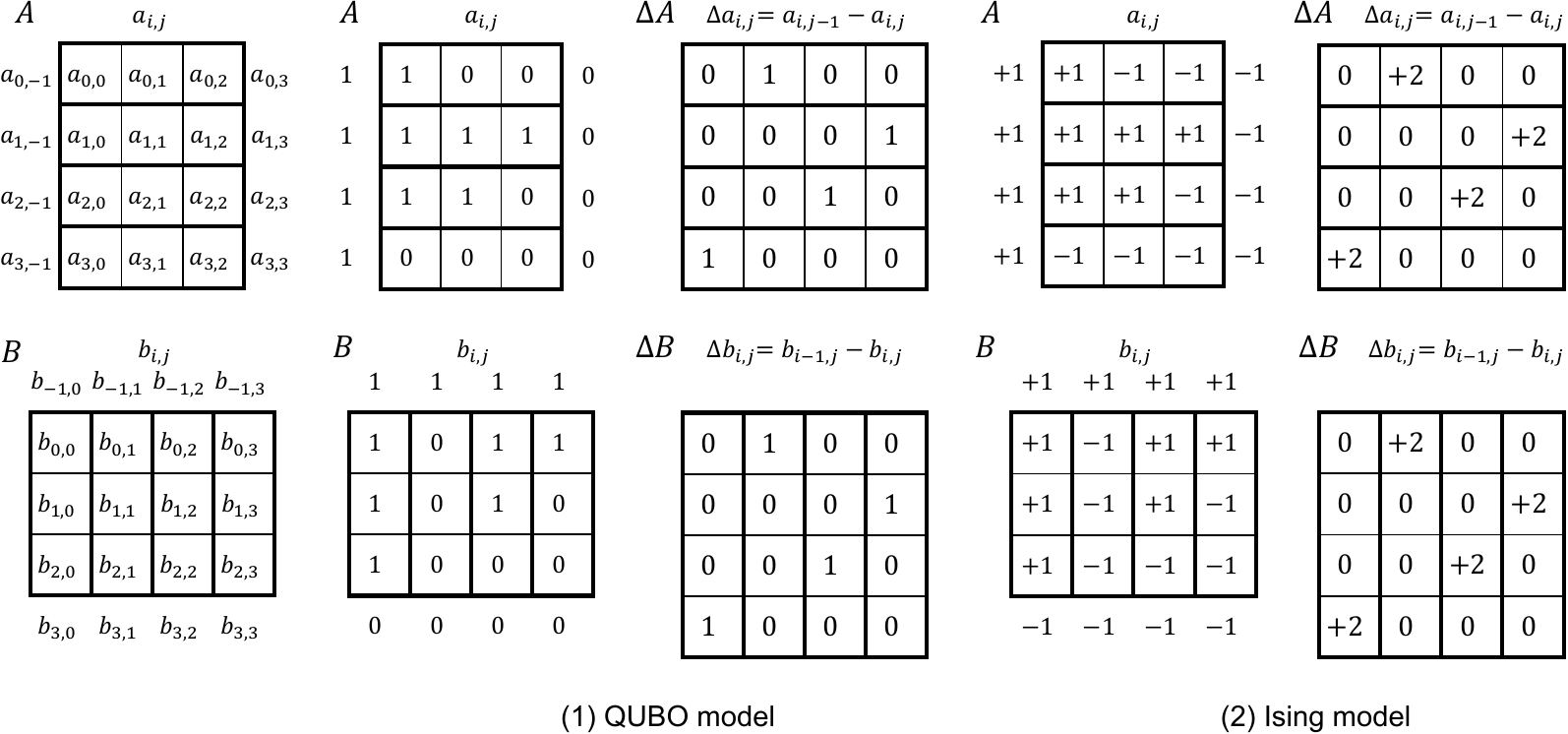}
	\end{adjustwidth}
\caption{The dual-matrix domain-wall technique for representing permutations of $n=4$ numbers on an Ising model.}
\label{fig:dual-matrix}
\end{figure}

For domain-wall encoding, fixed guard bits/qubits are attached to the leftmost and rightmost columns of matrix $A$, as well as the top and bottom rows of matrix $B$, as illustrated in Figure~\ref{fig:dual-matrix}.
{Specifically, we set $a_{i,-1}=1/+1$ and $a_{i,n-1}=0/-1$ for all $i$ ($0\leq i\leq n-1$), and $b_{-1,j}=1/+1$ and $b_{n-1,j}=0/-1$ for all $j$ ($0\leq i\leq n-1$) for QUBO/Ising models}.  
Our goal is to design QUBO/Ising models that have optimal solutions satisfying the following three conditions:

\textbf{{Row domain-wall condition:}
} Each row of matrix $A$ stores a domain-wall vector.

\textbf{{Column domain-wall condition}}: Each column of matrix $B$ stores a domain-wall vector.

\textbf{{Dual-permutation condition}}: The row permutation of matrix $A$ and the column permutation of matrix $B$ are dual to each other.

An optimal solution of the designed QUBO/Ising models corresponds to any one of the possible $n!$ permutations.

We begin by introducing the dual-matrix domain-wall technique for QUBO models.
We define matrices $\Delta A=(\Delta a_{i,j})$ and $\Delta B=(\Delta b_{i,j})$ of size $n\times n$ ($0\leq i,j\leq n-1$) by computing the differences in each row of matrix $A$ and each column of matrix $B$ as follows:
\begin{alignat}{2}
\Delta a_{i,j} &= a_{i,j-1}-a_{i,j} &\mbox{ and } \Delta b_{i,j} &= b_{i-1,j}-b_{i,j}.
\label{eq:delta}
\end{alignat}

Figure~\ref{fig:dual-matrix} illustrates examples of $\Delta A$ and $\Delta B$ with $n=4$.
The QUBO kernel for generating a permutation using the dual-matrix domain-wall technique is defined by the following expression, denoted as $E_d^{nn}(A,B)$:
\begin{align}
E_d^{nn}(A,B) &={1\over 2}\sum_{i=0}^{n-1}\sum_{j=0}^{n-1} (\Delta a_{i,j})^2+{1\over 2}\sum_{i=0}^{n-1}\sum_{j=0}^{n-1}(\Delta b_{i,j})^2
 +{1\over 2}\sum_{i=0}^{n-1}\sum_{j=0}^{n-1} (\Delta a_{i,j}-\Delta b_{i,j})^2
\label{eq:Ed_nn}
\end{align}

The first summation term takes the minimum value of ${1\over 2}n$ if and only if the row domain-wall condition is satisfied.
Similarly, the second summation term takes the minimum value of ${1\over 2}n$ if and only if the column domain-wall condition is satisfied.
Lastly, the third summation term takes the minimum value of $0$ if and only if $\Delta a_{i,j}=\Delta b_{i,j}$ for all $i$ and $j$.
In essence, this condition signifies the satisfaction of the dual-permutation condition.
It is important to note that these three summation terms can all reach their minimum values simultaneously.
Consequently, the energy function $E_d^{nn}(A,B)$ reaches its minimum value of $n$ if and only if matrices $A$ and $B$ store dual permutations, satisfying all three conditions: the row domain-wall condition, the column domain-wall condition, and the dual-permutation condition.

For the Ising model, we can use the exact same mathematical expression as the QUBO model, resulting in the Hamiltonian function $H_d^{nn}(A,B)$:
\begin{align}
H_d^{nn}(A,B) &={1\over 2}\sum_{i=0}^{n-1}\sum_{j=0}^{n-1} (\Delta a_{i,j})^2+{1\over 2}\sum_{i=0}^{n-1}\sum_{j=0}^{n-1}(\Delta b_{i,j})^2
 +{1\over 2}\sum_{i=0}^{n-1}\sum_{j=0}^{n-1} (\Delta a_{i,j}-\Delta b_{i,j})^2
\label{eq:Hd_nn}
\end{align}

Both the first and second summation terms take the minimum value of $2n$ if and only if both the row and column domain-wall conditions are satisfied.
The third summation term takes the minimum value of $0$ if and only if the dual-permutation condition is satisfied.
Since these three summation terms can take the minimum values at the same time, the Hamiltonian function $H_d^{nn}(A,B)$ takes the minimum value of $4n$ if and only if matrices $A$ and $B$ store dual permutations.
It is worth noting that although $E_d^{nn}(A,B)$ and $H_d^{nn}(A,B)$ have the same mathematical expressions, their expansions differ.
The energy function $E_d^{nn}(A,B)$ represents the QUBO model and takes the minimum value of $n$ when all three conditions (row domain-wall, column domain-wall, and dual-permutation) are met.
On the other hand, the Hamiltonian function $H_d^{nn}(A,B)$ corresponds to the Ising model and achieves the minimum value of $4n$ when matrices $A$ and $B$ satisfy the dual-permutation condition along with the row and column domain-wall conditions.
These QUBO/Ising models can be used as kernels for solving permutation-based combinational optimization problems as QUBO/Ising models.

The features of QUBO/Ising models obtained using our dual-matrix domain-wall encoding technique are presented in Table~\ref{tab:permutation-stat}.
It is evident from the table that these models consist of only $6n^2-12n+4$ quadratic terms.
In contrast, models designed using the conventional one-hot encoding approach contain $n^3-n^2$ quadratic terms, while models created using the all-different domain-wall encoding method have ${1\over 2}n^3-{3\over 2}n$ quadratic terms.
Thus, our dual-matrix domain-wall technique significantly reduces the number of quadratic terms in the models.

Moreover, our encoding technique also leads to a reduction in the linear term coefficients in the Ising models.
In Ising models obtained using conventional one-hot encoding and all-different domain-wall encoding, the linear terms have coefficients of $2n-4$ and $n-1$, respectively.
However, Ising models obtained through dual-matrix domain-wall encoding feature linear terms with a coefficient of $2$.

\section{QUBO/Ising Kernels for Partial Permutations}
\label{sec:partial}
For two positive integers satisfying $m \leq n$, a \emph{partial permutation} refers to a sequence of $m$ numbers selected from a set ${0, 1, \ldots, n-1}$ of $n$ numbers without repetition.
A permutation is a special case of a partial permutation where $m = n$.
In this section, we primarily focus on the case where $m < n$.
However, we can also consider a permutation of $n$ numbers by substituting $m$ with $n$.
A partial permutation can be represented by an injection $\pi: \{0,1,\ldots, m-1\} \rightarrow \{0, 1, \ldots, n-1\}$, where $\pi(i)$ denotes the $i$-th element in the partial permutation $[\pi(0),\pi(1),\ldots,\pi(m-1)]$.
The total number of possible partial permutations is given by $n!/(n-m)!$.
\emph{An inverse} of a partial permutation $\pi$ is a surjection $p:\{0,1,\ldots,n-1\}\rightarrow\{0,1,\ldots,m-1\}$ that satisfies $p(\pi(i))=i$ for all $i$ ($0\leq i\leq m-1$). 
It should be noted that $p(j)$ can take any value when $\pi^{-1}(j)$ is undefined, meaning that there is no $i$ satisfying $\pi(i)=j$.
We refer to such $\pi$ and $p$ as \emph{dual permutations}.

This section first demonstrates partial-permutation encoding using one-hot vectors.
Subsequently, we illustrate the application of the dual-matrix domain-wall technique to generate partial permutations.
It is important to note that the all-different domain-wall technique cannot be applied in this case.
Unlike permutation generation, the column-wise sums of the domain-wall encoding for a partial permutation are not fixed.

\subsection{Partial-Permutation Encoding by One-Hot Vectors}
We can construct QUBO/Ising kernels to generate partial permutations using zero-one-hot vectors.
For this purpose, we utilize an $m\times n$ matrix ($m<n$) of bits/qubits.
Each row $i$ ($0\leq i\leq m-1$) in the matrix represents $\pi(i)$ as a one-hot vector, while each column $j$ ($0\leq j\leq n-1$) represents $\pi^{-1}(j)$ as a zero-one-hot vector.
Here, we write $\pi^{-1}(j)=\varphi$ if there is no $i$ satisfying $\pi(i)=j$.
Figure~\ref{fig:onehot-phi} illustrates an example of a $3\times 5$ matrix that represents a partial permutation.
In this example, the rows correspond to the partial permutation $[\pi(0),\pi(1),\pi(2)]=[3, 0, 1]$ selected from the set $\{0, 1, 2, 3, 4\}$, and the columns represent the inverse $[\pi^{-1}(0),\pi^{-1}(1),\pi^{-1}(2)$, $\pi^{-1}(3)$ ,$\pi^{-1}(4)]=[1, 2, \varphi, 0, \varphi]$.

\begin{figure}[H]
\includegraphics[scale=0.7]{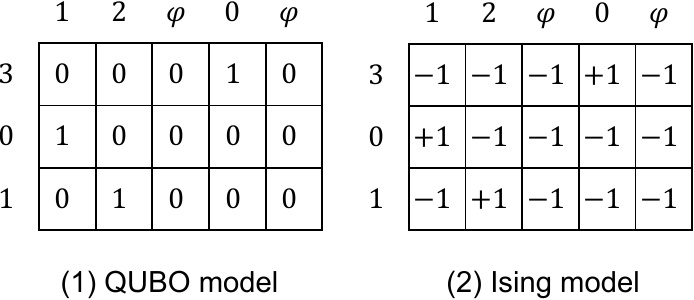}
\caption{$3\times 5$ matrix representing a partial permutation selecting 3 from 5  using one-hot and zero-one-hot vectors.}
\label{fig:onehot-phi}
\end{figure}

We can design the QUBO/Ising kernels $E_1^{mn}(X)$/$H_1^{mn}(S)$ generating a partial permutation as follows:
\begin{align}
E_1^{mn}(X) &= 
 {1\over 2}\sum_{i=0}^{m-1} \left(1-\sum\limits_{j=0}^{n-1}x_{i,j}\right)^2+
 {1\over 2}\sum_{j=0}^{n-1} \sum\limits_{i=0}^{n-1}x_{i,j}\left(1-\sum\limits_{i=0}^{m-1}x_{i,j}\right)
       \label{eq:QUBO-1hot-mn}\\
H_1^{mn}(S) &= 
 {1\over 2}\sum_{i=0}^{m-1}\left((n-2)+\sum\limits_{j=0}^{n-1}s_{i,j}\right)^2+
 {1\over 2}\sum_{j=0}^{n-1}\left(m+\sum\limits_{i=0}^{m-1}s_{i,j}\right)\left((m-2)+\sum\limits_{i=0}^{m-1}s_{i,j}\right)
              \label{eq:Ising-1hot-mn}
\end{align} 

The first summation term in each expression takes the minimum value of 0 when every row is a one-hot vector.
Similarly, the second summation term is evaluated as 0 when every column is a zero-one-hot vector.
Consequently, the optimal solutions of $E_1^{mn}(X)$/$H_1^{mn}(S)$ represent dual permutations.

We should note that previous works such as~\cite{Calude17,Yoshimura21} have already presented QUBO models for partial permutation using zero-one-hot vectors.
However, the mathematical expressions used in these works differ.
In~\cite{Calude17}, a slack variable $y_i$ ($0\leq i\leq n-1$) was introduced for each column $i$, and the expression $\sum\limits_{j=0}^{n-1} \left(1-y_i-\sum\limits_{i=0}^{m-1}x_{i,j}\right)^2$ was utilized to ensure that each column represented a zero-one-hot vector.
Similarly, in~\cite{Yoshimura21}, the expression $\sum\limits_{j=0}^{n-1}\left({1\over 2}-\sum\limits_{i=0}^{m-1}x_{i,j}\right)^2$ was employed for the same purpose.
In contrast, our proposed mathematical expression $\sum\limits_{i=0}^{n-1}x_{i,j}\left(1-\sum\limits_{i=0}^{m-1}x_{i,j}\right)$ for zero-one-hot vectors is more intuitive and simpler as it does not yield any linear terms.

\subsection{Partial-Permutation Encoding by Our Dual-Matrix Domain-Wall Encoding Technique}
We utilize two matrices, namely $A=(a_{i,j})$ ($0\leq i\leq m-1$ and $0\leq j\leq n-2$) of size $m\times(n-1)$ with bit/qubit variables and $B=(b_{i,j})$ ($0\leq i\leq m-1$ and $0\leq j\leq n-1$) of size $(m-1)\times n$.
These matrices are employed for our dual-matrix domain-wall technique and represent partial permutations.
For an illustration, refer to Figure~\ref{fig:domain-partial}, which shows example matrices $A$ and $B$ with dimensions $m=3$ and $n=5$, respectively.
Similarly to dual-matrix domain-wall encoding for permutations, we include fixed guard bits/qubits in the leftmost and rightmost columns of $A$, as well as the top and bottom rows of $B$, as depicted in Figure~\ref{fig:domain-partial}.
By applying Equation~(\ref{eq:delta}), we can derive two matrices $\Delta A=(\Delta a_{i,j})$ and $\Delta B=(\Delta b_{i,j})$ ($0\leq i\leq m-1$, $0\leq j\leq n-1$).

\begin{figure}[H]
	\begin{adjustwidth}{-\extralength}{0cm}
\includegraphics[scale=0.62]{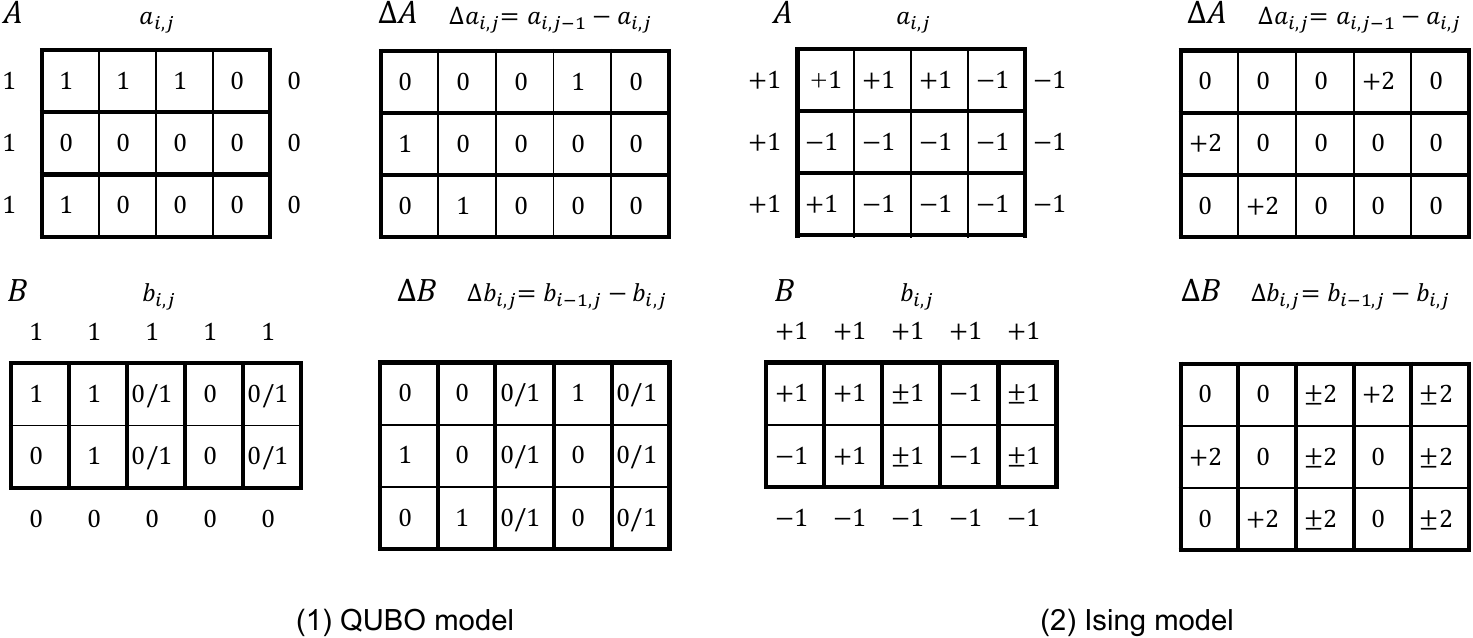}
	\end{adjustwidth}
\caption{Dual-matrix domain-wall encoding presenting a partial permutation selecting $3$ from $5$ using domain-wall vectors.}
\label{fig:domain-partial}
\end{figure}

To satisfy the three conditions, we make modifications to the mathematical expressions $E_d^{nn}(A,B)$ and $H_d^{nn}(A,B)$ for the QUBO and Ising kernels that generate permutations.
These modifications ensure that the row domain-wall condition, column domain-wall condition, and dual-permutation condition are met by the two matrices $A$ and $B$ with sizes $m\times(n-1)$ and $(m-1)\times n$, respectively.
The adjusted expressions are as follows:
\begin{align}
E_d^{mn}(A,B) &={1\over 2}\sum_{i=0}^{m-1}\sum_{j=0}^{n-1} (\Delta a_{i,j})^2+{1\over 2}\sum_{i=0}^{m-1}\sum_{j=0}^{n-1}(\Delta b_{i,j})^2
 +{1\over 2}\sum_{i=0}^{m-1}\sum_{j=0}^{n-1} (\Delta a_{i,j}-\Delta b_{i,j})^2
\label{eq:Ed_mn}\\
H_d^{mn}(A,B) &={1\over 2}\sum_{i=0}^{m-1}\sum_{j=0}^{n-1} (\Delta a_{i,j})^2+{1\over 2}\sum_{i=0}^{m-1}\sum_{j=0}^{n-1}(\Delta b_{i,j})^2
 +{1\over 2}\sum_{i=0}^{m-1}\sum_{j=0}^{n-1} (\Delta a_{i,j}-\Delta b_{i,j})^2
\label{eq:Hd_mn}
\end{align}

We proceed to demonstrate that $E_d^{mn}(A,B)$ attains its minimum value of $n-m$ if and only if
$\Delta A$ represents a partial permutation.
The first summation term takes the minimum value of ${1\over 2}m$
when all rows of matrix $A$ are domain-wall vectors, satisfying the row domain-wall condition. 
Similarly, the second summation terms will reach the minimum values of ${1\over 2}n$ if the column domain-wall condition is satisfied.
Suppose that the row domain-wall condition and the column domain-wall condition are satisfied.
Since the numbers of 1s in {$\Delta A$ and $\Delta B$} are $m$ and $n$, respectively, the third summation term takes the minimum value of ${1\over 2}(n-m)$ if
the dual-permutation condition is satisfied.
Hence, $E_d^{mn}(A,B)$ takes the value ${1\over 2}m+{1\over 2}n+{1\over 2}(n-m)=n$ if the three conditions are satisfied.
Unfortunately, this does not imply that $E_d^{mn}(A,B)$ takes the minimum value of $n$ if and only if the three conditions are satisfied.
The third summation term can be smaller than ${1\over 2}(n-m)$, say 0, if the row or column domain-wall condition is not satisfied.
Hence, we need to prove that $E_d^{mn}(A,B)$ is larger than $n$ if at least one of the three conditions is not satisfied.
More specifically, we prove the following lemma:
\begin{lem}
\label{lem:Ed_nn}
$E_d^{mn}(A,B)$ is larger than $n$ if $A$ and $B$ do not satisfy at least one of the row domain-wall, column domain-wall,
and dual-permutation conditions.
\end{lem}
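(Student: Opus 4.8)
The plan is to bound the three summation terms of $E_d^{mn}(A,B)$ separately and then split into cases. Write $T_1$, $T_2$, $T_3$ for the three half-sums of $(\Delta a_{i,j})^2$, $(\Delta b_{i,j})^2$, and $(\Delta a_{i,j}-\Delta b_{i,j})^2$, so that $E_d^{mn}(A,B)=T_1+T_2+T_3$. The first thing I would record is a telescoping identity forced by the guard bits: for every row $i$ we have $\sum_{j=0}^{n-1}\Delta a_{i,j}=a_{i,-1}-a_{i,n-1}=1$, and for every column $j$ we have $\sum_{i=0}^{m-1}\Delta b_{i,j}=b_{-1,j}-b_{m-1,j}=1$, regardless of the contents of $A$ and $B$. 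Summing these gives $\sum_{i,j}\Delta a_{i,j}=m$ and $\sum_{i,j}\Delta b_{i,j}=n$ unconditionally.

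Next I would prove three unconditional lower bounds. As in the domain-wall analysis behind Equation~(\ref{eq:QUBO-domain}), $\sum_{j}(\Delta a_{i,j})^2$ counts the positions where consecutive entries of row $i$ (with its guards) differ, a number that is odd because the row runs from guard $1$ to guard $0$, hence at least $1$, with equality iff row $i$ is a domain-wall vector; thus $T_1\geq{1\over 2}m$, with equality iff the row domain-wall condition holds. Symmetrically $T_2\geq{1\over 2}n$, with equality iff the column domain-wall condition holds. For the third term, put $d_{i,j}=\Delta a_{i,j}-\Delta b_{i,j}$; since $d^2\geq|d|$ for every integer $d$ and $\sum_{i,j}d_{i,j}=m-n$, the triangle inequality gives $\sum_{i,j}d_{i,j}^2\geq\sum_{i,j}|d_{i,j}|\geq|m-n|=n-m$, so $T_3\geq{1\over 2}(n-m)$. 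Adding the three bounds already shows $E_d^{mn}(A,B)\geq n$.

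The case analysis is then short. If the row or the column domain-wall condition fails, then $T_1>{1\over 2}m$ or $T_2>{1\over 2}n$ strictly while the remaining terms still satisfy their bounds, so $E_d^{mn}(A,B)>n$. The delicate case is when both domain-wall conditions hold but the dual-permutation condition fails. Here I would use the structure that domain walls impose: each row of $\Delta A$ is then a $0/1$ row with its single $1$ at column $\pi(i)$, and each column of $\Delta B$ is a $0/1$ column with its single $1$ at row $p(j)$. With both matrices in $\{0,1\}$, the quantity $(\Delta a_{i,j}-\Delta b_{i,j})^2$ is simply the indicator that the two entries differ, so counting the $m$ ones of $\Delta A$ against the $n$ ones of $\Delta B$ gives $T_3={1\over 2}\big((m+n)-2k\big)$, where $k=|\{\,i: p(\pi(i))=i\,\}|$ is the number of positions where a $1$ of $\Delta A$ coincides with a $1$ of $\Delta B$. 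Since $\pi$ is defined on the $m$ indices $0,\ldots,m-1$, we have $k\leq m$, with equality exactly when $p(\pi(i))=i$ for all $i$, i.e. the dual-permutation condition. Hence failure of that condition forces $k<m$, whence $T_3>{1\over 2}(n-m)$ and $E_d^{mn}(A,B)=m+n-k>n$.

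The main obstacle is precisely this last case: the domain-wall assumptions pin $T_1$ and $T_2$ at their minima, so the entire strict gap must be produced by $T_3$ alone. The crux is therefore to convert ``each row/column is a domain-wall vector'' into the clean $0/1$ permutation-matrix picture for $\Delta A$ and $\Delta B$, and to identify the overlap count $k$ with the number of indices fixed by $p\circ\pi$; establishing $k\leq m$ with equality iff the dual-permutation condition holds is the heart of the argument. By comparison, the telescoping identities and the three unconditional bounds of the earlier steps are routine.
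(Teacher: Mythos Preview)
Your proof is correct and in fact cleaner than the paper's on the case the paper treats as hardest. The paper worries explicitly that when the row or column domain-wall condition fails, $T_3$ might drop below $\tfrac12(n-m)$ (``say $0$''), and therefore introduces counts $k_A,k_B$ of extra $\pm1$ pairs in $\Delta A,\Delta B$, bounds $T_3$ from below by $\bigl|(\tfrac12m+\tfrac12k_A)-(\tfrac12n+\tfrac12k_B)\bigr|+\bigl|\tfrac12k_A-\tfrac12k_B\bigr|$, and then splits into two sub-cases according to the sign of $(\tfrac12m+\tfrac12k_A)-(\tfrac12n+\tfrac12k_B)$. Your telescoping identities $\sum_{i,j}\Delta a_{i,j}=m$, $\sum_{i,j}\Delta b_{i,j}=n$ together with $d^2\ge|d|$ for integers give $T_3\ge\tfrac12(n-m)$ \emph{unconditionally}, which makes that worry evaporate and collapses the paper's whole case analysis into one line: if either domain-wall condition fails, one of $T_1,T_2$ is strictly above its floor while the other two bounds still hold, so $E_d^{mn}>n$. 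On the remaining case (both domain-wall conditions satisfied, duality failing) the two arguments coincide; your overlap count $k$ and the paper's observation that then $T_3>\tfrac12(n-m)$ are the same computation. What the paper's bookkeeping buys is an explicit quantitative handle via $k_A,k_B$ that could feed into gap estimates; what your route buys is a shorter, more conceptual proof that also makes transparent why the minimum value is exactly $n$.
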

\begin{proof}
Suppose that the row and column domain-wall conditions are satisfied and the numbers of 1s in $\Delta A$ and $\Delta B$ are $m$ and $n$, respectively.
We can observe that the third summation term of $E_d^{mn}(A,B)$ exceeds ${1\over 2}(n-m)$ when $\Delta A$ and $\Delta B$ are not dual.
Consequently, we can determine that $E_d^{mn}(A,B) > {1\over 2}m+{1\over 2}n+{1\over 2}(n-m) = n$.

Next, let us consider the case in which the row and/or column domain-wall conditions are not fulfilled.
If a row of $A$ is not a domain-wall vector, the numbers of 1/$-1$s in $\Delta A$ are
$(1+k)$/$k$ for some $k\geq 1$.
For example, if a row of $A$ is $11001011$, the corresponding row of $\Delta A$ would be $0010(-1)1(-1)100$, containing three $1$s and two $-1$s.
Based on this observation, we can assign non-negative integers $k_A$ and $k_B$ such that
the numbers of 1/$-1$s in $\Delta A$ and $\Delta B$ are 
$(m+k_A)$/$k_A$ and $(n+k_B)$/$k_B$, respectively.
It follows that $\Delta A$ and $\Delta B$ have $m+2k_A$ and $n+2k_B$ non-zero elements, respectively.
Since we assume that the row and/or column domain-wall conditions are not satisfied, at least one of $k_A$ and $k_B$ must be greater than or equal to 1.
Clearly, the first and second summation terms take values of  ${1\over 2}m+k_A$ and ${1\over 2}n+k_B$, respectively.
The third summation term must be at least
\begin{align}
\left|({1\over 2}m+ {1\over 2}k_A)-({1\over 2}n+ {1\over 2}k_B)\right|+\left|{1\over 2}k_A-{1\over 2}k_B\right|.
\end{align}

We can now analyze the value of $E_d^{mn}(A,B)$ in the following cases:
\begin{enumerate}[label=,labelsep=12mm]
\item [Case~1:]  $({1\over 2}m+ {1\over 2}k_A)\leq ({1\over 2}n+ {1\over 2}k_B)$
\end{enumerate}

The lower bound of $E_d^{mn}(A,B)$ can be evaluated by the sum of the three summation terms as follows:\newpage
\begin{align}
E_d^{mn}(A,B)&\geq({1\over 2}m+k_A)+({1\over 2}n+k_B)+{\left(({1\over 2}n+ {1\over 2}k_B)-({1\over 2}m+ {1\over 2}k_A)\right)}\nonumber\\
  &=n+{1\over 2}k_A+{3\over 2}k_B>n.
\end{align}

\begin{enumerate}[label=,labelsep=12mm]
\item [Case~2:] $({1\over 2}m+ {1\over 2}k_A)> ({1\over 2}n+ {1\over 2}k_B)$
\end{enumerate}

Similarly, $E_d^{mn}(A,B)$ can be evaluated by the sum of the first and second summation terms as follows: 
\begin{align}
E_d^{mn}(A,B)&>({1\over 2}m+k_A)+({1\over 2}n+k_B) >({1\over 2}n+ {1\over 2}k_B)+({1\over 2}n+k_B) >n.
\end{align}

Thus, this completes the proof of the lemma.
\end{proof}

Since we have shown that $E_d^{mn}(A,B)=n$ if the three conditions are satisfied, this lemma implies the following theorem:
\begin{thm}
$E_d^{mn}(A,B)$ takes the minimum value of $n$ if and only if $A$ and $B$ satisfy the row domain-wall, column domain-wall,
and dual-permutation conditions are satisfied.
\end{thm}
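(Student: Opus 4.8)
The plan is to combine the explicit computation carried out in the paragraph just before the lemma with Lemma~\ref{lem:Ed_nn}, since together these two facts pin down both the minimum value and the exact set of configurations that attain it. First I would recall that it was already shown that when the row domain-wall, column domain-wall, and dual-permutation conditions all hold, the three summation terms of $E_d^{mn}(A,B)$ simultaneously reach their respective minima ${1\over 2}m$, ${1\over 2}n$, and ${1\over 2}(n-m)$, so that $E_d^{mn}(A,B)={1\over 2}m+{1\over 2}n+{1\over 2}(n-m)=n$. This establishes that the value $n$ is attainable and delivers the ``if'' direction: satisfying all three conditions yields energy exactly $n$.

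Next I would invoke Lemma~\ref{lem:Ed_nn}, which asserts that whenever at least one of the three conditions fails, $E_d^{mn}(A,B)>n$ strictly. Combined with the previous paragraph, this shows that $n$ is in fact the global minimum of $E_d^{mn}(A,B)$: no configuration can drop below $n$, and $n$ is achieved. Moreover, since every violating configuration lies strictly above $n$, the minimum value $n$ is attained precisely on those $A,B$ that satisfy all three conditions, which is exactly the biconditional claimed. The ``only if'' direction is thus just the contrapositive of the lemma.

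I do not anticipate a genuine obstacle here, because the substantive case analysis---bounding the third summation term from below through the non-negative integers $k_A,k_B$ that count the excess domain walls, and splitting according to the sign of $({1\over 2}m+{1\over 2}k_A)-({1\over 2}n+{1\over 2}k_B)$---was already discharged inside the proof of Lemma~\ref{lem:Ed_nn}. The only point requiring care is the logical bookkeeping of the two directions. If I were instead proving the statement from scratch without citing the lemma, the hard part would be showing that the cross term cannot compensate for a deficient first or second term, i.e. that the partial cancellation available in $\sum (\Delta a_{i,j}-\Delta b_{i,j})^2$ can never pull the total below $n$ once a domain-wall condition is broken.
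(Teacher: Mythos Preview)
Your proposal is correct and mirrors the paper's own argument exactly: the paper proves the theorem in a single sentence by noting that the preceding computation gives $E_d^{mn}(A,B)=n$ when the three conditions hold and then citing Lemma~\ref{lem:Ed_nn} for the converse. Your write-up simply unpacks that sentence into the two directions and the observation that $n$ is therefore the global minimum, which is precisely what the paper intends.
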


Next, we will evaluate the value of $H_d^{mn}(A,B)$ when the three conditions are satisfied.
The first summation term and second summation term take minimum values of $2m$ and $2n$, respectively, if and only if the row and column domain-wall conditions are satisfied.
Additionally, if the dual-permutation condition is satisfied, the third summation term takes a value of $2(n-m)$.
Hence, $H_d^{mn}(A,B)$ is equal to $2m+2n+2(n-m) = 4n$ if the row domain-wall, column domain-wall, and dual-permutation conditions are all satisfied.
Similarly, we can prove that $H_d^{mn}(A,B)$ is larger than $4n$ if at least one of the three conditions is not satisfied.
Thus, we can state the following theorem:
\begin{thm}
$H_d^{mn}(A,B)$ takes the minimum value of $4n$ if and only if $A$ and $B$ satisfy the row domain-wall, column domain-wall, and dual-permutation conditions.
\end{thm}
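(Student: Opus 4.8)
The plan is to avoid repeating the case analysis of Lemma~\ref{lem:Ed_nn} by exploiting the fact that $H_d^{mn}(A,B)$ and $E_d^{mn}(A,B)$ share the identical symbolic expression, differing only in the domains of their variables. First I would introduce the standard bit-to-qubit correspondence $s_{i,j}=2x_{i,j}-1$, applied entrywise to both matrices $A$ and $B$, and check that it sends the QUBO guards $a_{i,-1}=1$, $a_{i,n-1}=0$, $b_{-1,j}=1$, $b_{m-1,j}=0$ precisely to the Ising guards $+1,-1,+1,-1$. This establishes a bijection between admissible bit-configurations and admissible qubit-configurations that respects the boundary conditions.

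The key step is that differencing is linear, so under this correspondence every boundary and interior difference scales by exactly two:
\begin{align}
\Delta a_{i,j}^{\mathrm{Ising}} &= (2x_{i,j-1}-1)-(2x_{i,j}-1)=2\,\Delta a_{i,j}^{\mathrm{QUBO}},
\end{align}
and identically for $\Delta b_{i,j}$. Since both $H_d^{mn}$ and $E_d^{mn}$ are sums of the squared quantities $(\Delta a_{i,j})^2$, $(\Delta b_{i,j})^2$, and $(\Delta a_{i,j}-\Delta b_{i,j})^2$, each such term is multiplied by $4$, yielding the exact pointwise identity $H_d^{mn}(A,B)=4\,E_d^{mn}(A,B)$ under $s=2x-1$ — with no additive offset, since these expressions are homogeneous quadratics in the difference variables.

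It then remains to verify that the three conditions are invariant under the bijection. A bit domain-wall vector $[1,\dots,1,0,\dots,0]$ maps to the qubit domain-wall vector $[+1,\dots,+1,-1,\dots,-1]$ (cf.\ Table~\ref{tab:vectors}), so the row and column domain-wall conditions are preserved; and the dual-permutation condition $\Delta a_{i,j}=\Delta b_{i,j}$ is scale-invariant, hence preserved as well. Consequently, the qubit-configurations minimizing $H_d^{mn}$ are exactly the images under $s=2x-1$ of the bit-configurations minimizing $E_d^{mn}$, and the minimum of $H_d^{mn}$ is four times the QUBO minimum. Invoking the preceding theorem for $E_d^{mn}$ (itself a consequence of Lemma~\ref{lem:Ed_nn}), which asserts that the value $n$ is attained if and only if the three conditions hold, I would conclude that $H_d^{mn}(A,B)$ attains its minimum $4n$ if and only if $A$ and $B$ satisfy the row domain-wall, column domain-wall, and dual-permutation conditions.

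The main obstacle is bookkeeping rather than conceptual: one must confirm that the factor-of-two scaling holds uniformly, including at the guard columns of $A$ and guard rows of $B$, so that no correction term spoils the clean identity $H_d^{mn}=4E_d^{mn}$. Once this is checked, the theorem follows with no new inequality. As an alternative staying entirely within the Ising picture, I could instead mirror the proof of Lemma~\ref{lem:Ed_nn} directly, tracking the counts of $+2$ and $-2$ entries in $\Delta A$ and $\Delta B$; the arithmetic is the QUBO argument scaled by four, but the reduction via $s=2x-1$ is cleaner and reuses the already-established result.
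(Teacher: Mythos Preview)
Your argument is correct and takes a genuinely different route from the paper. The paper does not give a detailed proof of this theorem: after establishing that $H_d^{mn}(A,B)=4n$ when the three conditions hold, it simply asserts that ``similarly, we can prove that $H_d^{mn}(A,B)$ is larger than $4n$'' otherwise, implicitly inviting the reader to rerun the case analysis of Lemma~\ref{lem:Ed_nn} with qubit values. Your approach instead exploits the affine bijection $s=2x-1$ to obtain the exact identity $H_d^{mn}=4E_d^{mn}$ and then invokes the already-proved Theorem~1, which is cleaner and avoids any new inequality work; the paper itself notes that $E_d^{mn}$ and $H_d^{mn}$ share the same symbolic expression, so your observation that the differences scale uniformly by~$2$ (including at the guard positions) is exactly what is needed. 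One small imprecision: in the partial-permutation setting the dual-permutation condition is \emph{not} the identity $\Delta a_{i,j}=\Delta b_{i,j}$ for all $i,j$ (since $\Delta A$ has $m$ ones while $\Delta B$ has $n$), but rather the combinatorial statement that the permutation encoded by the rows of $A$ and the surjection encoded by the columns of $B$ are dual. This does not affect your argument, because that combinatorial condition is a property of the matrices $A,B$ themselves and is manifestly preserved under the entrywise relabeling $s=2x-1$; you may simply want to rephrase that sentence.
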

This theorem establishes the condition for $H_d^{mn}(A,B)$ to attain its minimum value and confirms that it occurs when the row domain-wall, column domain-wall, and dual-permutation conditions are satisfied.

The readers should refer to Table~\ref{tab:partial-stat}, which provides the features of kernels $E_d^{mn}(A,B)$ and $H_d^{mn}(A,B)$.
It can be observed that the number of quadratic terms in these models is given by $6mn-4m-4n$.
They utilize only a quadratic number of quadratic terms, whereas kernels $E_1^{mn}(A,B)$ and $H_1^{mn}(A,B)$ require a cubic number of quadratic terms.
Additionally, the maximum absolute value of the linear term coefficients in $H_d^{mn}(A,B)$ is only $2$, while $H_1^{mn}(A,B)$ requires large coefficients of $m+n-3$.

\begin{table}[H]
\caption{QUBO/Ising kernels for generating a partial permutation selecting $m$ numbers from $n$ numbers.}
\label{tab:partial-stat}
	\begin{adjustwidth}{-\extralength}{0cm}
		\setlength{\cellWidtha}{\fulllength/4-2\tabcolsep+0in}
		\setlength{\cellWidthb}{\fulllength/4-2\tabcolsep-0in}
		\setlength{\cellWidthc}{\fulllength/4-2\tabcolsep+0in}
		\setlength{\cellWidthd}{\fulllength/4-2\tabcolsep-0in}
		\scalebox{1}[1]{\begin{tabularx}{\fulllength}{>{\raggedright\arraybackslash}m{\cellWidtha}>{\centering\arraybackslash}m{\cellWidthb}>{\centering\arraybackslash}m{\cellWidthc}>{\centering\arraybackslash}m{\cellWidthd}>{\centering\arraybackslash}m{\cellWidthd}}
			\toprule
                     &              & \multicolumn{2}{c}{\textbf{Domain-Wall}}\\
\textbf{Encoding Type}  & \textbf{One-Hot} & \textbf{Dual-Matrix} & \textbf{Extended}\\
\midrule
Bit/qubit count &$mn$ & $2mn-m-n$ &$3mn-m-n$\\
\midrule
 QUBO models &&&\\
Quadratic formula &$E_1^{mn}(A,B)$ & $E_d^{mn}(A,B)$ & {$E_e^{mn}(A,B,X)$}\\ 
Linear term count &  $mn$& $2mn-2m-2n$ & $3mn-3m-2n+1$\\
Linear term coefficients & $-1$ & $+2$ & $-1,+1,+2,+3$\\
Quadratic term count &  ${1\over 2}m^2n+{1\over 2}mn^2-mn$ & {$6mn-6m-6n+4$} 
 &{$6mn-4m-4n$}\\
Quadratic term coefficients & $+1,+2$ & $-2,-1,+1$ & $-3,-2,-1,+1,+2$\\
Diameter &  2 & $m+n-3$ & $m+n$\\
Offset & $m$ & $m+n-1$ & ${3\over 2}m+{1\over 2}n$\\
Optimal energy &$0$ & $n$ & ${1\over 2}m+{1\over 2}n$\\
\bottomrule
\end{tabularx}}
\end{adjustwidth}
\end{table}

\begin{table}[H]\ContinuedFloat
\caption{{\em Cont.}}
\begin{adjustwidth}{-\extralength}{0cm}
		\setlength{\cellWidtha}{\fulllength/4-2\tabcolsep+0in}
		\setlength{\cellWidthb}{\fulllength/4-2\tabcolsep-0in}
		\setlength{\cellWidthc}{\fulllength/4-2\tabcolsep+0in}
		\setlength{\cellWidthd}{\fulllength/4-2\tabcolsep-0in}
		\scalebox{1}[1]{\begin{tabularx}{\fulllength}{>{\raggedright\arraybackslash}m{\cellWidtha}>{\centering\arraybackslash}m{\cellWidthb}>{\centering\arraybackslash}m{\cellWidthc}>{\centering\arraybackslash}m{\cellWidthd}>{\centering\arraybackslash}m{\cellWidthd}}
			\toprule
                     &              & \multicolumn{2}{c}{\textbf{Domain-Wall}}\\
\textbf{Encoding Type}  & \textbf{One-Hot} & \textbf{Dual-Matrix} & \textbf{Extended}\\
\midrule
Ising models &&&\\
Quadratic formula &  $H_1^{mn}(A,B)$  & $H_d^{mn}(A,B)$ & {$H_e^{mn}(A,B,S)$} \\ 
Linear term count & $mn$ & $2m+2n$ & $mn+2m+2n$\\
Linear term coefficients & {$m+n-3$}  & {$-2,+2$} & {$-3,-1,+1,+2,+3,+4$}\\
Quadratic term count & ${1\over 2}m^2n+{1\over 2}mn^2-mn$ &  {$6mn-6m-6n+4$} & {$6mn-4m-4n$}\\
Quadratic term coefficients & $+1$  & $-2,-1,+1$ & $-3,-2,-1,+1,+2$\\
Diameter & $2$ & $m+n-3$ & $m+n$\\
Offset & ${1\over 2}m^2n+{1\over 2}mn^2-2mn+2m$ & $4mn-4$ & $8mn-4m-2n$\\
Optimal Hamiltonian & $0$ & $4n$ & $2m+2n$\\
\bottomrule
\end{tabularx}}
	\end{adjustwidth}
\end{table}

\section{Dual-Matrix Domain-Wall Technique Extended with a One-Hot Matrix}
\label{sec:extended}
When the QUBO kernels designed by the dual-matrix domain-wall technique are involved in QUBO models for solving permutation-based combinatorial optimization problems, $\Delta A=(\Delta a_{i,j})$ is used to compute the objective function of the optimization problems.
For example, when checking if both $\pi(i)=j$ and $\pi(i')=j'$ are satisfied, a quadratic term $\Delta a_{i,j}\Delta a_{i',j'}$ is employed.
However, since $\Delta a_{i,j}$s are not variables, they are expanded using the variable $a_{i,j}$s as shown below:
\begin{align}
\Delta a_{i,j}\Delta a_{i',j'}&=(a_{i,j-1}-a_{i,j})(a_{i',j'-1}-a_{i',j'})\nonumber\\
 &=a_{i,j-1}a_{i',j'-1}-a_{i,j-1}a_{i',j'}-a_{i,j}a_{i',j'-1}+a_{i,j}a_{i',j'},
\label{eq:four}
\end{align}

Since a single quadratic term $\Delta a_{i,j}\Delta{a_{i',j'}}$ is expanded to the four quadratic terms involving the variable $a_{i,j}$s,
the resulting QUBO model obtained through reduction contains numerous quadratic terms.
To mitigate this issue, we introduce a matrix $X=(x_{i,j})$ with $n\times n$-bit variables, which satisfies $X=\Delta A$, into our QUBO kernel. 
Similarly, a one-hot matrix $S=(s_{i,j})$ with $n\times n$-qubit variables
is added to the Ising kernel.
The matrix ensures that $s_{i,j}=\Delta a_{i,j}-1$ for all $i$ and $j$, because each $\Delta a_{i,j}$ takes 0 or 2.
Since these matrices $X$ and $S$ store a permutation as a one-hot encoding, we refer to them as \emph{one-hot matrices}

To extend our dual-matrix domain-wall technique for one-hot matrices, we introduce an additional condition along with the row domain-wall condition, column domain-wall condition, and dual-permutation condition.
This new condition is defined as follows:

\textbf{One-hot permutation condition}. Each row of matrix $X$/$S$ represents the permutation corresponding to matrix $A$
in the form of a one-hot vector.

We begin by designing a QUBO model that satisfies these four conditions.
The mathematical expression for this model is as follows:
\begin{align}
E_e^{nn}(A,B,X) &={1\over 2}\sum_{i=0}^{n-1}\sum_{j=0}^{n-1} (\Delta a_{i,j})^2+{1\over 2}\sum_{i=0}^{n-1}\sum_{j=0}^{n-1}(\Delta b_{i,j})^2\nonumber\\
 &\quad +{1\over 2}\sum_{i=0}^{n-1}\sum_{j=0}^{n-1} (x_{i,j}-\Delta a_{i,j})^2+{1\over 2}\sum_{i=0}^{n-1}\sum_{j=0}^{n-1} (x_{i,j}-\Delta b_{i,j})^2
\label{eq:Ee_nn}
\end{align}

Similarly to the previous model, the first and second summation terms achieve the minimum value of $n\over 2$ when the row and column domain-wall conditions are satisfied, respectively.
The third and fourth summation terms reach the minimum value of 0 if both the dual-permutation condition and the one-hot permutation condition are met. Therefore, matrix $X$ stores a permutation if and only if $E_e^{nn}(A,B,X)$ reaches its minimum value of $n$.

The same technique can also be applied to the Ising model. 
In this case, we introduce matrix $S=(s_{i,j})$ of size $n\times n$, where $s_{i,j}+1=\Delta a_{i,j}$ holds for all $i$ and $j$.
The Ising model $H_e^{nn}(A,B,S)$ can be designed as follows:
\begin{align}
H_e^{nn}(A,B,S) &={1\over 2}\sum_{i=0}^{n-1}\sum_{j=0}^{n-1} (\Delta a_{i,j})^2+{1\over 2}\sum_{i=0}^{n-1}\sum_{j=0}^{n-1}(\Delta b_{i,j})^2\nonumber\\
 &\quad  +{1\over 2}\sum_{i=0}^{n-1}\sum_{j=0}^{n-1} ((s_{i,j}+1)-\Delta a_{i,j})^2+{1\over 2}\sum_{i=0}^{n-1}\sum_{j=0}^{n-1} ((s_{i,j}+1)-\Delta b_{i,j})^2
\label{eq:He_nn}
\end{align}

Similarly to the QUBO model,  the four conditions are satisfied, and matrix $S$ stores a one-hot encoding of a permutation if and only if $H_e^{nn}(A,B,S)$ reaches its minimum value of $4n$.
Optimal solutions of $H_e^{nn}(A,B,S)$ correspond to a one-hot encoding stored in \mbox{matrix $S$.}
{The features of QUBO/Ising models obtained by expanding $E_e^{nn}(A,B,X)$ and $H_e^{nn}(A,B,S)$} {are presented in Table~\ref{tab:permutation-stat}.}

We further apply the same technique for a partial permutation.
However, a straightforward modification of $E_e^{nn}(A,B,X)$/$H_e^{nn}(A,B,S)$ to fit them to a partial permutation may not
generate QUBO/Ising models correctly.
The optimal solutions of such QUBO/Ising models may not satisfy the four conditions.
We modify their fourth summation term and obtain the following QUBO model $E_e^{mn}(A,B,X)$:
\begin{align}
E_e^{mn}(A,B,X) &={1\over 2}\sum_{i=0}^{m-1}\sum_{j=0}^{n-1} (\Delta a_{i,j})^2+{1\over 2}\sum_{i=0}^{m-1}\sum_{j=0}^{n-1}(\Delta b_{i,j})^2\nonumber\\
 &\quad +\sum_{i=0}^{m-1}\sum_{j=0}^{n-1} (x_{i,j}-\Delta a_{i,j})^2+\sum_{i=0}^{m-1}\sum_{j=0}^{n-1} x_{i,j}(1-\Delta b_{i,j})
\label{eq:Ee_mn}
\end{align}

Note that the third and fourth summation terms do not have a coefficient of $1\over 2$, because the terms in the expanded formula
will have a non-integer coefficient if they have a coefficient of $1\over 2$.
The first and second summation terms take minimum values of ${1\over 2}m$ and ${1\over 2}n$, respectively,
if and only if the row and column domain-wall conditions are satisfied.
The third summation term takes the minimum value of 0 if $X$ and $\Delta A$ are equal.
The fourth summation term takes the minimum value of 0 if $x_{i,j}=0$ or $\Delta b_{i,j}=1$ holds for all $i$ and $j$.
Thus, $E_e^{mn}(A,B,X)={1\over 2}m+{1\over 2}n$ if the four conditions are satisfied.
We present the following lemma to prove that the ``if'' in the previous statement is ``if and only if'':
\begin{lem}
\label{lem:Ed_mn}
$E_e^{mn}(A,B,X)$ is larger than ${1\over 2}(m+n)$ if matrices $A$, $B$, and $X$ do not satisfy at least one of the row domain-wall, column domain-wall, dual-permutation, and one-hot permutation conditions.
\end{lem}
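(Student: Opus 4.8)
The plan is to mirror the proof of Lemma~\ref{lem:Ed_nn}, decomposing $E_e^{mn}(A,B,X)$ into its four summation terms and lower-bounding each in terms of how badly the four conditions are violated. Writing the numbers of $+1$s and $-1$s in $\Delta A$ as $m+k_A$ and $k_A$, and those in $\Delta B$ as $n+k_B$ and $k_B$ (with $k_A,k_B\geq 0$, by the same counting of domain-wall violations as in the proof of Lemma~\ref{lem:Ed_nn}), the first two terms evaluate exactly to ${1\over 2}m+k_A$ and ${1\over 2}n+k_B$. Here $k_A\geq 1$ holds precisely when the row domain-wall condition fails, and $k_B\geq 1$ precisely when the column domain-wall condition fails.

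The key step is to lower-bound the combined third and fourth terms, $\sum_{i,j}\bigl[(x_{i,j}-\Delta a_{i,j})^2+x_{i,j}(1-\Delta b_{i,j})\bigr]$, by minimizing each summand independently over $x_{i,j}\in\{0,1\}$. A short case check on the nine possible values of $(\Delta a_{i,j},\Delta b_{i,j})\in\{-1,0,1\}^2$ shows that the per-position minimum is $0$ exactly when $\Delta a_{i,j}=0$ or $(\Delta a_{i,j},\Delta b_{i,j})=(1,1)$, and is $1$ otherwise. Letting $q$ denote the number of positions with $\Delta a_{i,j}=1$ and $\Delta b_{i,j}\neq 1$, this minimum sums to $q+k_A$, so that for every $X$ one obtains the master inequality
\begin{align}
E_e^{mn}(A,B,X)\geq {1\over 2}(m+n)+2k_A+k_B+q.
\end{align}
Under the row and column domain-wall conditions, $q$ counts exactly the positions where the permutation read off from $A$ and the inverse read off from $B$ disagree, so $q\geq 1$ is equivalent to the failure of the dual-permutation condition.

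It then remains to verify $2k_A+k_B+q\geq 1$ whenever a condition fails. If the row or column domain-wall condition fails, then $k_A\geq 1$ or $k_B\geq 1$; if both hold but the dual-permutation condition fails, then $q\geq 1$; in all three cases the master inequality immediately gives $E_e^{mn}(A,B,X)>{1\over 2}(m+n)$. The main obstacle is the remaining case, in which the row domain-wall, column domain-wall, and dual-permutation conditions all hold but the one-hot permutation condition fails: then $k_A=k_B=q=0$, so the master inequality alone yields only $E_e^{mn}(A,B,X)\geq{1\over 2}(m+n)$, not a strict inequality. To close this gap I would revisit the per-position minimization: when these three conditions hold, every pair $(\Delta a_{i,j},\Delta b_{i,j})$ lies in $\{(0,0),(0,1),(1,1)\}$, and for each of these the minimizing bit is the \emph{unique} value $x_{i,j}=\Delta a_{i,j}$. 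Hence $X=\Delta A$ is the unique minimizer of the third-plus-fourth term, so the failure of the one-hot permutation condition ($X\neq\Delta A$) forces that term to be a positive integer, giving $E_e^{mn}(A,B,X)>{1\over 2}(m+n)$ and completing the proof.
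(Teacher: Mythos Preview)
Your proof is correct, but it takes a more elaborate route than the paper. The key difference is that you import the $k_A,k_B$ counting machinery from Lemma~\ref{lem:Ed_nn}, whereas the paper notices that this machinery is unnecessary here: since $x_{i,j}\in\{0,1\}$ and $\Delta b_{i,j}\in\{-1,0,1\}$, the fourth term $\sum x_{i,j}(1-\Delta b_{i,j})$ is always $\geq 0$, so the four summation terms are bounded below by ${1\over 2}m$, ${1\over 2}n$, $0$, $0$ for \emph{every} $A,B,X$. This makes the row/column domain-wall cases immediate without any counting. For the remaining two cases (under both domain-wall conditions) the paper argues pointwise just as you do: if $\Delta a_{i,j}=1$ and $\Delta b_{i,j}=0$ then either $(x_{i,j}-\Delta a_{i,j})^2$ or $x_{i,j}(1-\Delta b_{i,j})$ equals $1$, and if $X\neq\Delta A$ the third term is positive. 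Your nine-case minimization over $x_{i,j}$ and the resulting master inequality $E_e^{mn}\geq{1\over 2}(m+n)+2k_A+k_B+q$ are a more systematic packaging of the same observations; they buy you an explicit quantitative gap that the paper does not state, at the cost of doing more work than the lemma actually requires.
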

\begin{proof}
For any $A$, $B$, and $X$, the first, second, third, and fourth summation terms in $E_e^{mn}(A,B,X)$ are at least ${1\over 2}m$, ${1\over 2}n$, 0, and 0, respectively.
If the row/column domain-wall conditions are not satisfied, the first and second summation terms are larger than ${1\over 2}m$ and ${1\over 2}n$, respectively.
Hence, if the row and/or the column domain-wall conditions are not satisfied, $E_e^{mn}(A,B,X)>{1\over 2}(m+n)$.

Suppose that the row and column domain-wall conditions are satisfied.
Clearly, matrices $A$ and $B$ have $m$ and $n$~1s, respectively, and all the other elements are 0.
Thus, the first and second summation terms take minimum values of ${1\over 2}m$ and ${1\over 2}n$, respectively.
If the dual-permutation condition is not satisfied, that is, $\Delta A$ and $\Delta B$ are not dual, then
there exist $i$ and $j$ such that $\Delta a_{i,j}=1$ and $\Delta b_{i,j}=0$.
For such an $i$ and $j$, either {$(x_{i,j}-\Delta a_{i,j})^2$} or {$x_{i,j}\cdot(1-\Delta b_{i,j})$} must be 1.  
Thus, at least one of the third or fourth summation terms is larger than 0.
If the one-hot permutation condition is not satisfied, 
then $X$ and $\Delta A$ are not equal, and the third summation term is larger than 0.
Thus, $E_e^{mn}(A,B,X)>{1\over 2}m+{1\over 2}n$ holds.
This completes the proof of Lemma~\ref{lem:Ed_mn}.
\end{proof}
Since we have proved that $E_e^{mn}(A,B)={1\over 2}m+{1\over 2}n$ if the four conditions are satisfied, we can state the following theorem:
\begin{thm}
$E_e^{mn}(A,B,X)$ takes the minimum value of ${1\over 2}m+{1\over 2}n$ if and only if $A$ and $B$ satisfy
the row domain-wall, column domain-wall, dual-permutation, and one-hot permutation conditions.
\end{thm}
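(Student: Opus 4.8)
The plan is to derive the theorem as an immediate consequence of Lemma~\ref{lem:Ed_mn} together with the evaluation carried out just before it. The statement is an ``if and only if'' about the minimum value, which unpacks into two claims: that the global minimum of $E_e^{mn}(A,B,X)$ equals ${1\over 2}(m+n)$, and that a configuration attains this minimum exactly when all four conditions (row domain-wall, column domain-wall, dual-permutation, and one-hot permutation) hold. I would prove both by partitioning the space of configurations into those that satisfy all four conditions and those that violate at least one.

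First I would record the easy direction. When all four conditions are satisfied, the first two summation terms of $E_e^{mn}(A,B,X)$ attain ${1\over 2}m$ and ${1\over 2}n$, while the third and fourth both vanish, since then $X=\Delta A$ holds everywhere (making the third term zero) and $x_{i,j}(1-\Delta b_{i,j})=0$ for every $i,j$ (making the fourth term zero). This yields $E_e^{mn}(A,B,X)={1\over 2}m+{1\over 2}n$, exactly as computed in the paragraph preceding the lemma, and shows that the value ${1\over 2}(m+n)$ is attainable, so the minimum is at most ${1\over 2}(m+n)$.

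For the converse I would invoke Lemma~\ref{lem:Ed_mn} verbatim: any triple $(A,B,X)$ that fails at least one of the four conditions satisfies $E_e^{mn}(A,B,X)>{1\over 2}(m+n)$. Combining the two observations, every configuration produces an energy that is either equal to ${1\over 2}(m+n)$, when all conditions hold, or strictly greater, otherwise. Hence ${1\over 2}(m+n)$ is indeed the minimum value of $E_e^{mn}(A,B,X)$, and it is attained if and only if the four conditions are satisfied, which is precisely the claim.

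Because the substantive case analysis has already been absorbed into Lemma~\ref{lem:Ed_mn}---in particular the delicate point that the third summation term can drop below ${1\over 2}(n-m)$, possibly even to $0$, only when a domain-wall condition is already broken---the remaining work is purely the assembly of these two bounds, and I do not expect a genuine obstacle. The one thing to verify carefully is that the lemma's hypothesis is exactly the negation of the conjunction of the four conditions, so that the two cases cover every configuration with no overlap left unaccounted for; once that partition is confirmed the theorem follows with no further computation.
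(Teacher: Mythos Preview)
Your proposal is correct and mirrors the paper's own argument exactly: the paper likewise deduces the theorem immediately from Lemma~\ref{lem:Ed_mn} combined with the preceding direct computation that $E_e^{mn}(A,B,X)={1\over 2}m+{1\over 2}n$ whenever all four conditions hold. (One minor slip in your closing parenthetical: the remark about the third summation term possibly dropping below ${1\over 2}(n-m)$ pertains to $E_d^{mn}$ and Lemma~\ref{lem:Ed_nn}, not to $E_e^{mn}$, whose four summation terms are separately bounded below by ${1\over 2}m$, ${1\over 2}n$, $0$, and $0$; this does not affect the validity of your argument.)
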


We can extend the Ising model for a partial permutation to have a one-hot matrix $S$ in the same way as follows:
\begin{align}
H_e^{mn}(A,B,S) &={1\over 2}\sum_{i=0}^{m-1}\sum_{j=0}^{n-1} (\Delta a_{i,j})^2+{1\over 2}\sum_{i=0}^{m-1}\sum_{j=0}^{n-1}(\Delta b_{i,j})^2\nonumber\\
 &\quad  +\sum_{i=0}^{m-1}\sum_{j=0}^{n-1} ((s_{i,j}+1)-\Delta a_{i,j})^2+\sum_{i=0}^{m-1}\sum_{j=0}^{n-1} (s_{i,j}+1)\cdot(2-\Delta b_{i,j})
\label{eq:He_mn}
\end{align}

If the four conditions are satisfied, then the first, second, third, and fourth summation terms
take the values $2m$, $2n$, 0, and 0, respectively.
Thus, $H_e^{mn}(A,B,S)=2m+2n$ holds when the four conditions are satisfied.
Similarly, we can prove that $H_e^{mn}(A,B)$ is larger than $2m+2n$ if at least one of the four conditions is not satisfied.
Thus, we can state the following theorem:
\begin{thm}
$H_e^{mn}(A,B,S)$ takes the minimum value of $2m+2n$ if and only if $A$, $B$, and $S$ satisfy the row domain-wall, column domain-wall, dual-permutation, and
one-hot permutation conditions.
\end{thm}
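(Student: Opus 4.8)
The plan is to mirror the proof of Lemma~\ref{lem:Ed_mn}, carrying the argument over from bits to qubits by tracking the values $\{-2,0,2\}$ taken by $\Delta a_{i,j}$ and $\Delta b_{i,j}$ and the values $\{0,2\}$ taken by $s_{i,j}+1$. First I would record a baseline lower bound for each of the four summation terms. Writing $H_e^{mn}(A,B,S)=T_1+T_2+T_3+T_4$ in the order the terms appear, the first term satisfies $T_1\geq 2m$ with equality if and only if the row domain-wall condition holds: each row of $A$, framed by its guard qubits $a_{i,-1}=+1$ and $a_{i,n-1}=-1$, has an odd (hence at least one) number of sign changes, and exactly one iff it is a domain-wall vector, so $\sum_j(\Delta a_{i,j})^2\geq 4$ per row. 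Symmetrically $T_2\geq 2n$, with equality iff the column domain-wall condition holds. Since $s_{i,j}+1\in\{0,2\}$, $\Delta a_{i,j}\in\{-2,0,2\}$, and $2-\Delta b_{i,j}\in\{0,2,4\}$, both $T_3=\sum((s_{i,j}+1)-\Delta a_{i,j})^2$ and $T_4=\sum(s_{i,j}+1)(2-\Delta b_{i,j})$ are nonnegative; moreover $T_3=0$ iff $s_{i,j}+1=\Delta a_{i,j}$ everywhere, and $T_4=0$ iff for every $(i,j)$ either $s_{i,j}=-1$ or $\Delta b_{i,j}=2$.

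For the ``if'' direction I would verify that the four conditions force all four bounds to be attained simultaneously. The domain-wall conditions give $T_1=2m$ and $T_2=2n$; the one-hot permutation condition gives $s_{i,j}+1=\Delta a_{i,j}$, so $T_3=0$; and the dual-permutation condition $p(\pi(i))=i$ places the wall of column $\pi(i)$ of $B$ in row $i$, so every position with $\Delta a_{i,j}=2$ is also a position with $\Delta b_{i,j}=2$, whence each summand of $T_4$ vanishes and $T_4=0$. Hence $H_e^{mn}(A,B,S)=2m+2n$.

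The ``only if'' direction is a case analysis parallel to Lemma~\ref{lem:Ed_mn}. If the row and/or column domain-wall condition fails, then $T_1>2m$ or $T_2>2n$ while the remaining terms stay nonnegative, so $H_e^{mn}>2m+2n$. Assuming both domain-wall conditions hold, if the dual-permutation condition fails there is a position $(i,j)$ with $\Delta a_{i,j}=2$ and $\Delta b_{i,j}=0$; for such a position, if $s_{i,j}+1=2$ the corresponding summand of $T_4$ is $2\cdot(2-0)=4>0$, and if $s_{i,j}+1=0$ the corresponding summand of $T_3$ is $(0-2)^2=4>0$, so in either case $H_e^{mn}>2m+2n$. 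Finally, if the domain-wall and dual-permutation conditions hold but the one-hot permutation condition fails, then $s_{i,j}+1\neq\Delta a_{i,j}$ for some $(i,j)$, forcing $T_3>0$.

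I expect the main obstacle to be the fourth term $T_4$, which, unlike the third term and unlike the symmetric permutation kernel $H_d^{nn}$, is a product rather than a perfect square and is deliberately asymmetric in $A$ and $B$. The care lies in confirming $T_4\geq 0$ by checking $2-\Delta b_{i,j}\geq 0$ over all admissible $\Delta b_{i,j}$, and in showing that this one-directional penalty, which enforces only that the $m$ walls of $\Delta A$ sit among the $n$ walls of $\Delta B$ and not the converse, is exactly strong enough to detect a broken dual-permutation condition without over-penalizing the $n-m$ columns of $B$ outside the image of $\pi$. Once this asymmetry is handled, the remaining bookkeeping reduces to the same integer arithmetic as in the QUBO proof.
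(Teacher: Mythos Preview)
Your proposal is correct and follows essentially the same approach as the paper, which simply states that the ``only if'' direction can be proved ``similarly'' to Lemma~\ref{lem:Ed_mn} without spelling out the details. Your case analysis---bounding each of the four summation terms, then treating a failed domain-wall condition, a failed dual-permutation condition, and a failed one-hot permutation condition in turn---is exactly the intended translation of that lemma to the qubit setting, and your handling of the asymmetric product term $T_4$ via $s_{i,j}+1\in\{0,2\}$ and $2-\Delta b_{i,j}\in\{0,2,4\}$ is the right way to verify its nonnegativity.
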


The reader should refer to Table~\ref{tab:partial-stat}, which provides the features of $E_e^{mn}(A,B,X)$ and $H_e^{mn}(A,B,S)$.
It can be observed that the number of quadratic terms in these kernels is given by $6mn-4m-4n$, while $E_d^{mn}(A,B)$ and $H_d^{mn}(A,B)$ have $6mn-6m-6n+4$ quadratic terms.
Therefore, the introduction of one-hot matrices results in a relatively small increment in the quadratic term counts.
This reduction in the number of quadratic terms highlights the effectiveness of utilizing one-hot matrices $X$ and $S$ in the QUBO and Ising models for partial permutations.
By employing these techniques, we can significantly enhance the efficiency of solving optimization problems associated with partial permutations

\section{Applications of QUBO/Ising Kernels for Generating Permutations in Combinatorial Optimization Problems}
\label{sec:ppp}
In this section, we begin by introducing \emph{{the Particle Placement Problem (PPP)}
}, which serves as a fundamental problem that many permutation-based combinatorial optimization problems can be reduced to with ease.
We demonstrate that the PPP can be effectively reduced to QUBO/Ising models with kernels for generating permutations.
Furthermore, we show that several permutation-based combinatorial optimization problems, such as the quadratic assignment problem (QAP), traveling salesman problem (TSP), and the sub-graph isomorphism problem, can be equivalently reduced to the PPP.
Consequently, these combinatorial optimization problems can be reduced to QUBO/Ising models through the PPP.

\subsection{The Particle Placement Problem (PPP)}
Suppose we have $m$ particles with IDs $0, 1, \ldots, m-1$ and $n$ positions with IDs $0, 1,\ldots,$\linebreak $ n-1$, where $m \leq n$ is satisfied.
Let us consider the PPP, which involves placing $m$ particles in $n$ positions without conflicts.
\emph{The particle placement} is determined by injection $\pi: \{0, 1, \ldots, m-1\} \rightarrow \{0, 1, \ldots, n-1\}$,
where each particle $i$ ($0 \leq i \leq m-1$) is placed in position $\pi(i)$.
The problem aims to find an optimal placement $\pi$ that minimizes the total energy, which is defined by two types of energies as follows:

{\bf Potential \boldmath{$P_{i,j}$}}: the energy that occurs when particle $i$ is placed in position $j$ ($0 \leq i \leq m-1$ and $0 \leq j \leq n-1$).

{\bf Interaction \boldmath{$I_{i,j,i',j'}$}}: the energy that occurs between two particles $i$ and $i'$ when they are placed in positions $j$ and $j'$, respectively, where $(i,j,i',j') \in Q(m,n)$.

Here, $Q(m,n)$ is a set of all consistent quartets $(i,j,i',j')$ satisfying $0 \leq i < i' \leq m-1$, $0 \leq j,j' \leq m-1$, and $j \neq j'$.
Both potential $P_{i,j}$ and interaction $I_{i,j,i',j'}$ can be any integer, including negative values.
The total energy of $\pi$ is the sum of all potentials and interactions, given by the equation
\begin{align}
{\it PPP}(\pi) = \sum_{i=0}^{m-1} P_{i,\pi(i)} + \sum_{i=0}^{m-2} \sum_{i'=i+1}^{m-1} I_{i,\pi(i),i',\pi(i')}
\end{align}

\emph{The particle placement problem (PPP)} aims to find an optimal placement $\pi$ with the minimum ${\it PPP}(\pi)$ among all possible particle placements.

We will demonstrate that the total energy of $\pi$ can be computed using the corresponding binary one-hot matrix, and we can reduce the PPP to a QUBO problem.
In this sub-section, we focus on the PPP with $m$ particles and $n$ locations, where $m<n$ and $\pi$ represents a partial permutation.
However, it is worth noting that the same technique can be applied to the PPP with $n$ particles and $n$ locations, as it is relatively straightforward.

Suppose that an $m\times n$-bit matrix $X$ stores a partial permutation $\pi$ as one-hot vectors.
In other words, $x_{i,j}=1$ holds if and only if $\pi(i)=j$.
Let ${\it PPP}(X)$ be a quadratic formula defined as follows:
\begin{align}
{\it PPP}(X) &= \sum_{i=0}^{m-1}\sum_{j=0}^{n-1}P_{i,j}x_{i,j} +\sum_{(i,j,i',j')}^{Q(m,n)}   I_{i,j,i',j'} x_{i,j}x_{i',j'}.
\label{eq:Energy}
\end{align}

Clearly, $x_{i,j}=1$ if and only if $\pi(i)=j$, and $x_{i,j}x_{i',j'}=1$ if and only if $\pi(i)=j$ and $\pi(i')=j'$.
Thus, we have ${\it PPP}(X)={\it PPP}(\pi)$.
Since ${\it PPP}(X)$ is a quadratic formula of $X$, we can design a QUBO model using a QUBO kernel $E_1^{mn}(X)$ corresponding to the PPP as follows:
\begin{align}
E_1^{\it PPP}(X)&= \lambda E_1^{mn}(X) +{\it PPP}(X),
\end{align}
where $\lambda$ is a parameter large enough to prioritize $E_1^{mn}(X)$ and guarantee that
$X$ is a one-hot vector.
By finding an $n\times n$-bit matrix $X$ that minimizes $E_1^{\it PPP}(X)$,
we can obtain the optimal solution $\pi$ of the PPP corresponding to such an $X$.

We can use the other QUBO kernels $E_d^{mn}(A,B)$ and $E_e^{mn}(A,B,X)$ instead of $E_1^{mn}(X)$: 
\begin{align}
E_d^{\it PPP}(A,B)&= \lambda E_d^{mn}(A,B) +{\it PPP}(\Delta A)\\
E_e^{\it PPP}(A,B,X)&= \lambda E_e^{mn}(A,B,X) +{\it PPP}(X)
\end{align}

Similarly, by finding their optimal solutions, we can obtain the optimal solution $\pi$ of the PPP.

For an $m\times n$-qubit matrix $S=(s_{i,j})$, let $S'=(s'_{i,j})$ denote the matrix of the same size satisfying
$s'_{i,j}={s_{i,j}+1}$ for all $i$ and $j$.
Clearly, $s'_{i,j}=0$/$2$ if $s_{i,j}=-1$/$+1$, respectively.
Thus, for ${\it PPP}(S')$ computed by the mathematical expression in Equation~(\ref{eq:Energy}),
${\it PPP}(S')=4{\it PPP}(\pi)$ is satisfied if $S$ stores permutation $\pi$ as one-hot vectors.
{Using this fact, we can design Ising models for solving the PPP as follows:}
\begin{align}
H_1^{\it PPP}(S)&= {\lambda H_1^{mn}(S)} +{\it PPP}(S')\\ 
H_d^{\it PPP}(A,B)&= \lambda H_d^{mn}(A,B) +{\it PPP}(\Delta A)\\
H_e^{\it PPP}(A,B,S)&= {\lambda H_e^{mn}(A,B,S)} +{\it PPP}(S') 
\end{align}

It should be clear that the optimal solutions of these Ising models give the optimal solutions of the PPP if
$\lambda$ is large enough to prioritize the Ising kernels for generating partial permutations.

The PPP can have a total of $mn$ potentials and ${1\over 2}m^2n^2-{1\over 2}m^2n-{1\over 2}mn^2+{1\over 2}mn$ interactions.
When the PPP is converted to the equivalent QUBO/Ising models, non-zero interactions may lead to quadratic terms.
 In essence, the total number of quadratic terms in the QUBO/Ising models for solving the PPP is the sum of the number of terms produced by a kernel generating permutations and the number of quadratic terms for non-zero interactions in the PPP.
As demonstrated in this paper, our dual-matrix domain-wall technique significantly reduces the count of quadratic terms.
However, if the number of quadratic terms derived from non-zero interactions in the PPP is much larger than that for the kernel generating permutations, the advantage of our dual-matrix domain-wall technique becomes limited.
It is worth noting that both the conventional one-hot technique and the all-different domain-wall technique require a cubic number of quadratic terms,
whereas our dual-matrix domain-wall technique only uses a quadratic number of quadratic terms.
If the PPP has full non-zero interactions, it would result in a number of quadratic terms that is proportional to the fourth power.
In such a scenario, the resulting QUBO/Ising models would have a fourth power number of quadratic terms, and the reduction effect achieved by the dual-matrix domain-wall technique for the kernel terms would be relatively small.
To evaluate the effect of the dual-matrix domain-wall technique on specific combinatorial optimization problems,
we will present several examples and assess the number of non-zero interactions in the PPP.

\subsection{Combinatorial Optimization Problems that Can Be Reduced to the PPP}
\textls[-15]{This subsection demonstrates that several combinatorial optimization problems, namely} the quadratic assignment problem (QAP), the traveling salesman problem (TSP), and the sub-graph isomorphism problem, which are known to be NP-hard~\cite{GJ79}, can be effectively reduced to the PPP. 
Consequently, the instances of these problems can be transformed into equivalent QUBO/Ising models.
Additionally, we illustrate that the maximum weight matching problem can also be reduced to the PPP. While this problem is not NP-hard and can be solved in polynomial time~\cite{Edmonds65}, there is currently no known efficient parallel algorithm for general graphs~\cite{Eppstein21}.
Consequently, if a highly powerful QUBO/Ising solver becomes available in the future,
it could be advantageous to solve this problem by utilizing the reduction to QUBO/Ising models.

\subsubsection{The Quadratic Assignment Problem (QAP)}
The quadratic assignment problem (QAP)~\cite{Koopmans57} aims
to find the optimal arrangement $\pi$ of $n$ facilities to $n$ positions.
This problem involves two types of instance parameters: flows and distances.
Let $f_{i,i'}$ and $d_{j,j'}$ denote the flow from facility $i$ to $i'$ ($0\leq i,i'\leq n-1$) and the distance from $j$ to $j'$ ($0\leq j,j'\leq n-1$), respectively.
The objective of QAP is to find a permutation $\pi$ that minimizes the total logistics cost, defined by the following formula:
\begin{align}
{{\it QAP}(\pi)} &=\sum_{i=0}^{n-1}\sum_{i'=0}^{n-1}f_{i,i'}\cdot d_{\pi(i),\pi(i')}.  
\end{align}

To reduce this problem to the PPP,
we set $I_{i,j,i',j'}=f_{i,i'}\cdot d_{j,j'}+f_{i',i}\cdot d_{j',j}$ for all $i$, $i'$, $j$, and $j'$ ($(i,j,i',j')\in Q(n,n)$)
and set $P_{i,j}=f_{i,i}\cdot d_{j,j}$ for all $i$ and $j$ ($0\leq i,j\leq n-1$).
Since {${\it QAP}(\pi)={\it PPP}(\pi)$} holds for all permutation $\pi$, any QAP instance can be reduced to the PPP.
The reduced PPP has up to ${1\over 2}n^4-n^3+{1\over 2}n^2$ non-zero interactions and $n^2$ non-zero potentials.
If not all flows are non-zero, the number of non-zero interactions may be reduced.
Let $k$ be the number of pairs of facilities $i$ and $i'$ ($0\leq i<i'\leq n-1$) such that at least one of $f_{i,i'}$ or $f_{i',i}$ is non-zero.
Clearly, $k$ is at most ${1\over 2}n^2-{1\over 2}n$.
In this case, the reduced PPP has only ${1\over 2}n^2k-{1\over 2}nk$ non-zero interactions.

\subsubsection{The Traveling Salesman Problem (TSP)}
The traveling salesman problem (TSP), which aims to
find the shortest route to visit all $n$ cities with IDs 0, 1, $\ldots$, $n-1$ can be reduced to the PPP as follows.
Let $d_{j,j'}$ ($0\leq j,j' \leq n-1$) denote the distance between two cities $j$ and $j'$.
We use a permutation $\pi$ for representing a route,
where the $i$-th visited city ($0\leq i\leq n-1$) corresponds to the city with ID $\pi(i)$.
The objective of the TSP is to find a permutation $\pi$ that minimizes the total route length computed using the following formula:
\begin{align}
{{\it TSP}(\pi)} &=\sum_{i=0}^{n-1} d_{\pi(i),\pi((i'+1)\bmod n)}. 
\label{eq:tsp}
\end{align}

To reduce the TSP to the PPP, we define $I_{i,j,i',j'}=d_{j,j'}$ for all $i$, $i'$, $j$, and $j'$ if $i'=(i+1) \bmod n$.
For all other $I_{i,j,i',j'}$ and $P_{i,j}$, we set them to 0.
This reduction allows us to observe that {${\it TSP}(\pi)={\it PPP}(\pi)$} holds for all permutations $\pi$.  
Therefore, it is possible to reduce any TSP instance to the PPP.
The reduced PPP has $n^3-n^2$ non-zero interactions.

We can consider the traveling salesman problem (TSP) for a weighted undirected graph in which each edge is assigned a weight value. 
The goal of the TSP is to find a Hamiltonian cycle in the graph with the minimum total weight that visits all nodes.
Figure~\ref{fig:tsp40} shows an example of a 40-node weighted graph.
We assume that the weight of each edge is its length in the figure. The figure also depicts the optimal solution of the TSP.
Note that if an input graph has no Hamiltonian cycle, the TSP has no feasible solution.

\begin{figure}[H]
\includegraphics[scale=0.6]{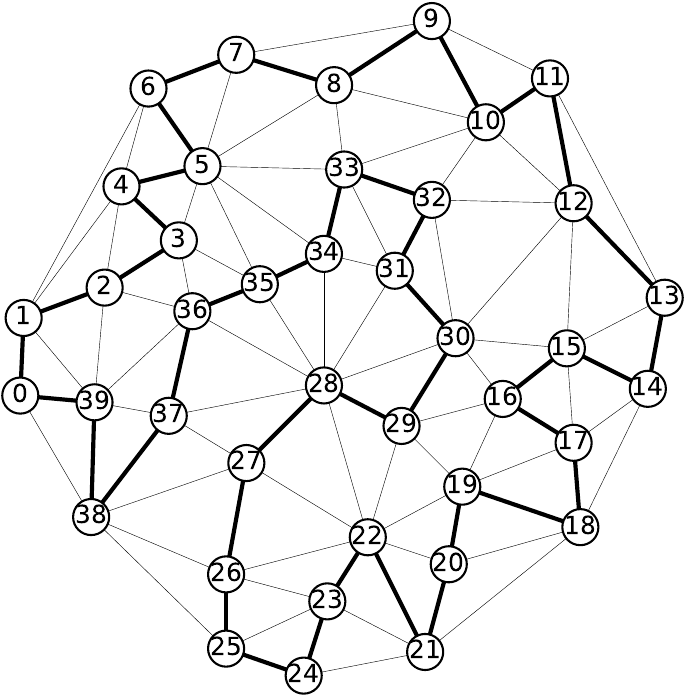}
\caption{A 40-node weighted graph with the optimal TSP solution.}
\label{fig:tsp40}
\end{figure}

Suppose that we have a weighted graph with nodes 0, 1, $\ldots$, $n-1$, and let $d_{j,j'}$ ($0\leq j,j' \leq n-1$) be the weight of an edge $(j,j')$.
If the graph does not have an edge $(j,j')$, we assume $d_{j,j'}=+\infty$ to prevent selecting $(j,j')$ as an edge of a cycle.
The total route of a node permutation $\pi$ is computed using Equation~(\ref{eq:tsp}), based on the defined $d_{j,j'}$ values.
For the real implementation, we set $d_{j,j'}={\it BIG}$ instead of $+\infty$, where ${\it BIG}$ is sufficiently larger than the maximum edge weight.
An edge $(j,j')$ with $d_{j,j'}={\it BIG}$ is selected only if the TSP has no feasible solution.

We will introduce the technique for reducing the TSP into the PPP, so that the reduced PPP becomes sparse. For this purpose, we apply a fixed bias of $-{\it BIG}$,
so that $I_{i,j,i',j'}=d_{j,j'}-{\it BIG}$ for all $i$, $i'$, $j$, and $j'$ if $i'=(i+1) \bmod n$. For all other $I_{i,j,i',j'}$ and $P_{i,j}$, we set them to 0.
This reduction allows us to observe that {${\it TSP}(\pi)={\it PPP}(\pi)+n\cdot{\it BIG}$}  holds for all permutations $\pi$.  
Additionally, $I_{i,j,i',j'}$ is non-zero only if the graph has an edge $(j,j')$ and $i'=(i+1) \bmod n$ holds.
Thus, the PPP becomes so sparse that it has only $2en$ non-zero interactions, where $e$ is the number of edges in the graph.
For instance, if no two edges intersect each other, as illustrated in Figure~\ref{fig:tsp40}, the graph has at most $3n-6$ edges.
Hence, the reduced PPP has at most $2(3n-6)n=6n^2-12$ interactions, which is much smaller than the original TSP with $n^3-n^2$ interactions.

\subsubsection{The Sub-graph Isomorphism Problem}
\textls[-35]{Let $G=(V,E)$ and $G'=(V',E')$ be guest and host graphs
such that $V=\{0, 1, \ldots, m-1\}$} and $V'=\{0, 1, \ldots, n-1\}$.
The sub-graph isomorphism problem aims to find a permutation $\pi$ such that
$(\pi(i),\pi(i'))\in E'$ holds for all $(i,i')\in E$.
If such a permutation $\pi$ exists,  it implies that $G$ is a sub-graph of $G'$.
To reduce the sub-graph isomorphism problem to the PPP,
we assign $-1$ to $I_{i,j,i',j'}$ and $I_{i,j',i',j}$ if both $(i,i')\in E$ and $(j,j')\in E'$ hold,
and assign $0$ to all other $I_{i,j,i',j'}$ as well as all $P_{i,j}$.
It is evident that $I_{i,\pi(i),i',\pi(i')}=-1$ holds for all edges $(i,i')$ in $G$ if $\pi$ is a permutation that proves $G$ is a sub-graph of $G'$.
Therefore, {${\it PPP}(\pi)=-|E|$} if such a permutation exists. 
Consequently, any instance of the sub-graph isomorphism problem can be reduced to the PPP.
The reduced PPP has $2\cdot |E|\cdot |E'|$ non-zero interactions.
If $G$ and $G'$ are connected graphs, then $m-1\leq |E|\leq {1\over 2}m^2-{1\over 2}m$ and $n-1\leq |E'|\leq {1\over 2}n^2-{1\over 2}n$ hold.
Hence, the reduced PPP has from {$2mn-2m-2n+2$} to ${1\over 2}m^2n^2-{1\over 2}m^2n-{1\over 2}mn^2+{1\over 2}mn$ non-zero interactions. 

\subsubsection{The Maximum Weight Matching Problem}
Let $G=(V,E,w)$ be a weighted graph, where $V=\{0, 1, \ldots, n-1\}$ and $w_{i,j}$ represents the weight of an edge $(i,j)$ ($i<j$) in $E$.
We assume that $w_{i,j}=0$ for all other pairs $(i,j)$ such that $i\geq j$ or $(i,j)$ is not in $E$.
A subset of $E$ is called a matching if none of its edges share a common node.
The maximum weight matching problem aims to find a matching $M$ ($\subseteq E$) with the largest total weight.
We can represent a matching $M$ using a permutation $\pi$ as follows. An edge $(i,j)$ ($\in E$) is in the matching $M$
if both $\pi(i)=j$ and $\pi(j)=i$ hold, and node $i$ is not connected to any edge in $M$ if $\pi(i)=i$.
To transform the maximum weight matching problem into the PPP,
we assign $-w_{i,j}$ to $I_{i,j,j,i}$ for all $(i,j)\in E$, while assigning $0$ to all other $I_{i,j,i',j'}$ as well as all $P_{i,j}$.
Let $\pi$ be a permutation that represents a matching $M$.
Since $w_{i,\pi(i)}=0$ whenever $i\geq \pi(i)$, we can compute the total weight of $M$ using the obtained PPP as follows:
\begin{align}
\sum_{(i,j)}^{M} w_{i,j} &= \sum_{i=0}^{n-1} w_{i,\pi(i)} =-\sum_{i=0}^{n-1} I_{i,\pi(i),\pi(i),i} = -{\it PPP}(\pi).
\end{align}

Therefore, any instance of the maximum weight matching problem can be reduced to the PPP with $|E|$ interactions.
If $G$ is a connected graph, $n-1\leq |E|\leq {1\over 2}n^2-{1\over 2}n$ holds,
and so the number of interactions is from $n-1$ to ${1\over 2}n^2-{1\over 2}n$.

\subsubsection{The Bipartite Maximum Weight Matching Problem}
If an instance of the maximum weight matching problem is restricted to a bipartite graph,
a more efficient reduction to the PPP can be used.
Consider a weighted bipartite graph, denoted as $G=(V,V',E,w)$, where $V=\{0, 1, \ldots, m-1\}$, $V'=\{0, 1, \ldots, n-1\}$,
$E\subseteq V\times V'$, and $w_{i,j}$ represents the weight of an edge $(i,j) \in E$.
We can use a partial permutation $\pi:V\rightarrow V'$ to represent a matching $M$ of $G$ such that $(i,\pi(i)) \in M$.
To transform the maximum weight matching problem for the weighted bipartite graph into the PPP,
we assign $-w_{i,j}$ to $P_{i,j}$ for all $(i,j)\in E$, while assigning $0$ to all other $P_{i,j}$ as well as all $I_{i,j,i',j'}$.
Thus, we can compute the total weight of $M$ using the obtained PPP as follows:
\begin{align}
\sum_{(i,j)}^{M} w_{i,j} &= \sum_{i=0}^{m-1} w_{i,\pi(i)} =-\sum_{i=0}^{m-1}P_{i,\pi(i)} = -{\it PPP}(\pi).
\end{align}

Therefore, any instance of the bipartite maximum weight matching problem can be reduced to the PPP.
The resulting PPP has $|E|$ non-zero potentials but has no non-zero interactions.

\subsubsection{The Quadratic Term Counts of the Ising Model}
This section presents the quadratic term counts for various permutation-based combinatorial optimization problems. We examined the total number of quadratic terms in Ising models and permutation kernels to observe the impact of smaller permutation kernels on the size of Ising models. To facilitate this analysis, we introduced the concept of PPP density, which represents the ratio of non-zero interactions to the maximum number of interactions $|Q(m,n)|={1\over 2}m^2n^2-{1\over 2}m^2n-{1\over 2}mn^2+{1\over 2}mn$.

{
We used PyQUBO~\cite{Pyqubo}, a Python tool designed to generate QUBO/Ising models from mathematical expressions, and SymPy~\cite{Sympy}, a Python library for symbolic mathematics, to validate the accuracy of the QUBO/Ising models. However, this approach, which operates on provided mathematical expressions, has a significant drawback in terms of its substantial time and space requirements. This limitation could make it impractical to solve particularly large problems.
For optimization problems based on specific permutations, an alternative and more efficient approach is to create a custom computer program for generating QUBO/Ising models. This approach can leverage a low-level programming language like C++.
Given the fixed and straightforward nature of the underlying mathematical expressions for such problems, it becomes feasible to design a program that efficiently generates QUBO/Ising models without the need to expand the mathematical expressions.
This process can be achieved with linear time complexity and requires a memory proportional to the size of the models.
}

Figure~\ref{fig:quadratic_term_count} illustrates the quadratic term counts for the following problem instances:
\begin{enumerate}
\item A random instance of the QAP with $n=50$: The reduced PPP consists of all 3,001,250 interactions, yielding a density of $1.00$.
\item A random instance of the QAP with $n=50$ and $10$\% non-zero flows: The reduced PPP contains 301,350 interactions out of 3,001,250 maximum interactions, yielding a density of $0.10$.
\item A random instance of the TSP with $n=100$: The reduced PPP includes 990,000 interactions out of 49,005,000 maximum interactions, yielding a density of 0.020.
\item The TSP for a randomly generated 40-node weighted graph as shown in Figure~\ref{fig:tsp40}: The graph has 104 edges, and the reduced PPP includes 8320 interactions out of 1,216,800 maximum interactions, yielding a density of $0.0068$.
\item The TSP for a randomly generated 300-node weighted graph:  The graph is generated to avoid edge intersections, similar to the 40-node weighted graph. It has 277 edges, and the reduced PPP includes 524,400 interactions out of a maximum of 4,023,045,000 interactions, yielding a density of $0.00013$.
\item A random instance of the sub-graph isomorphism problem for a three-regular guest graph with $m=200$ nodes and a six-regular host graph with $n=400$ nodes: The reduced PPP consists of 720,000 interactions out of 3,176,040,000 maximum interactions, yielding a density of $0.00023$.
\item A random instance of the maximum weight matching problem for a complete graph with $n=300$ nodes: The reduced PPP contains 44,850 interactions out of 4,023,045,000 maximum interactions, yielding a density of $0.000011$.
\item A random instance of the bipartite maximum weight matching problem for a complete bipartite graph with $m=n=300$ nodes: The reduced PPP has no interactions, yielding a density of zero.
\end{enumerate}

The figure illustrates the total number of quadratic terms in the resulting Ising model, referred to as the “model total”, for each problem. Additionally, it displays the number of quadratic terms contributed by the permutation kernels, referred to as the “permutation kernel”, which is also included in the model total.
The quadratic term counts are evaluated for four permutation kernels: the conventional one-hot technique (one-hot), the all-different domain-wall technique (all-different), the dual-matrix domain-wall technique (dual-matrix), and the extended technique with a one-hot matrix (extended).

\begin{figure}[H]
	\begin{adjustwidth}{-\extralength}{0cm}
\begin{tabular}{cc}
\includegraphics[width=8cm]{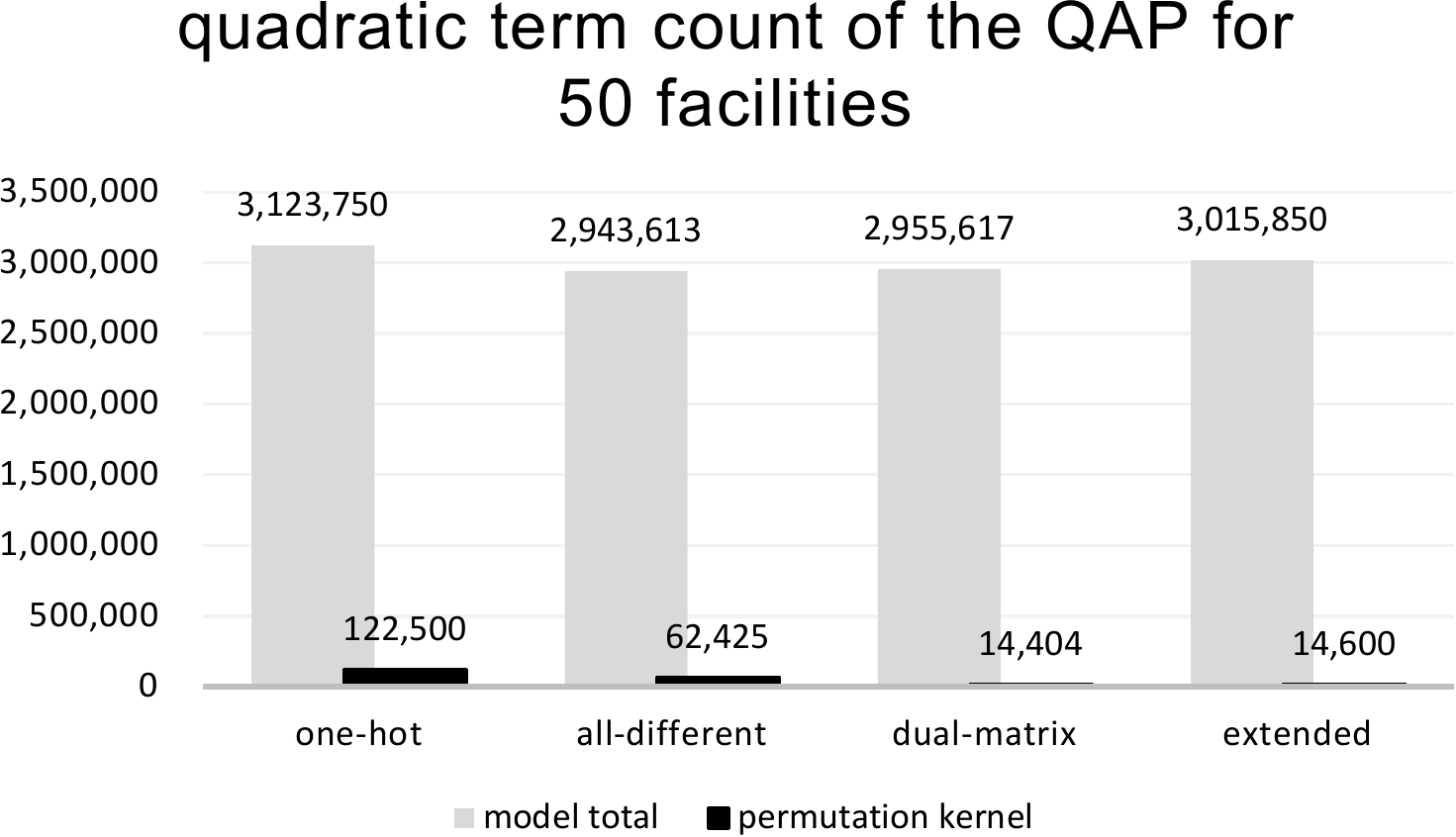}&\includegraphics[width=8cm]{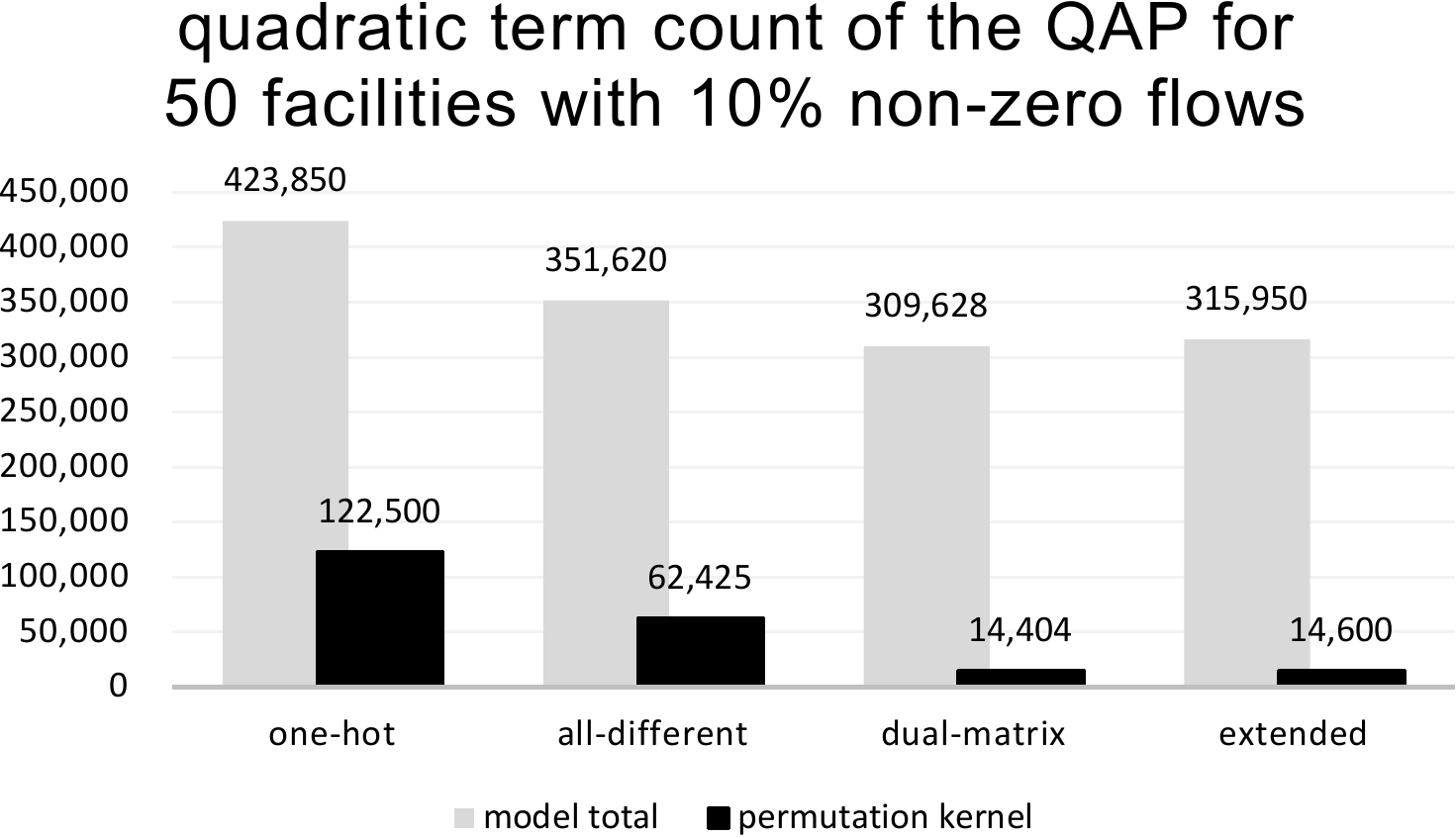}\\
&\\
\includegraphics[width=8cm]{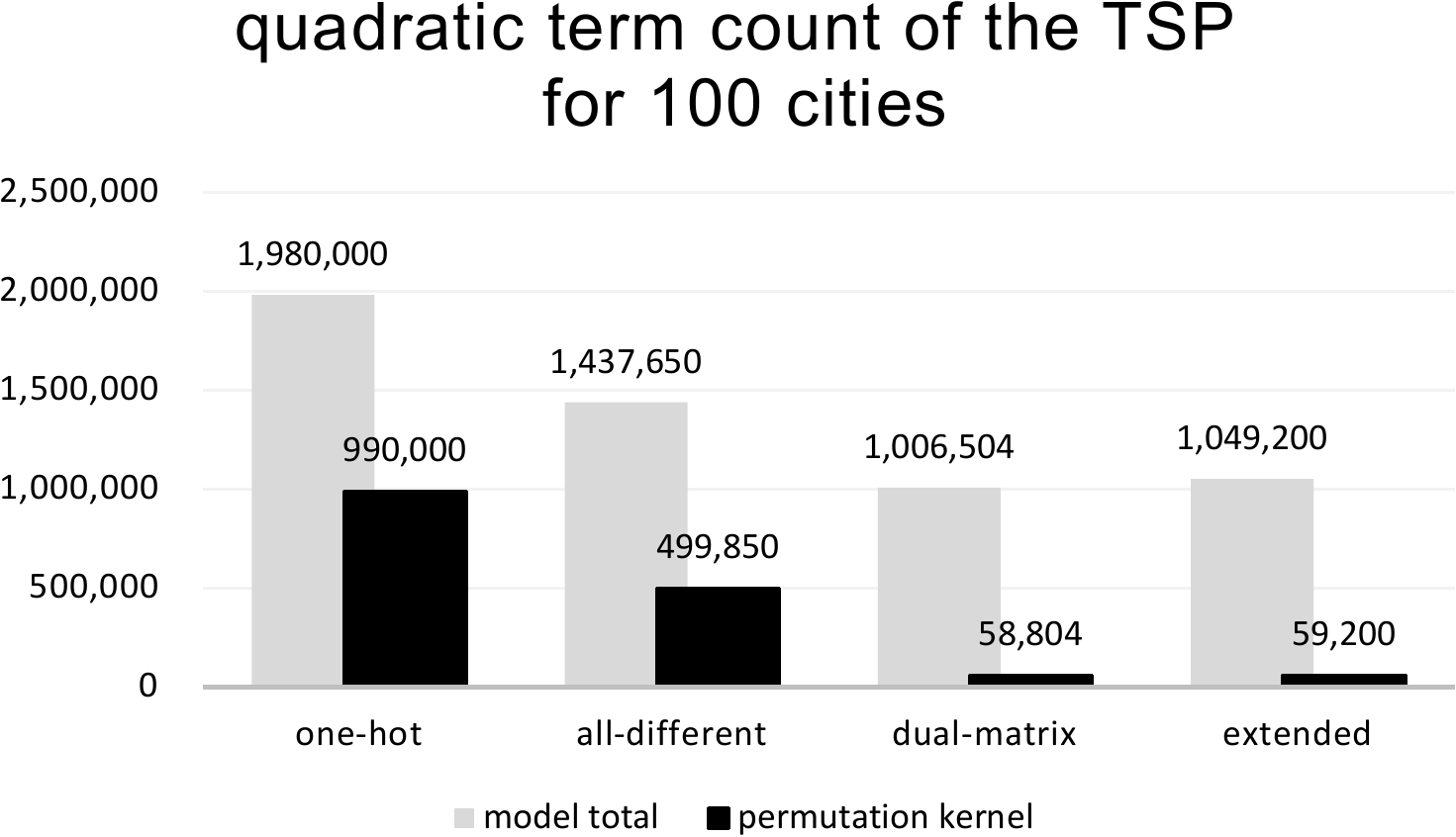}&\includegraphics[width=8cm]{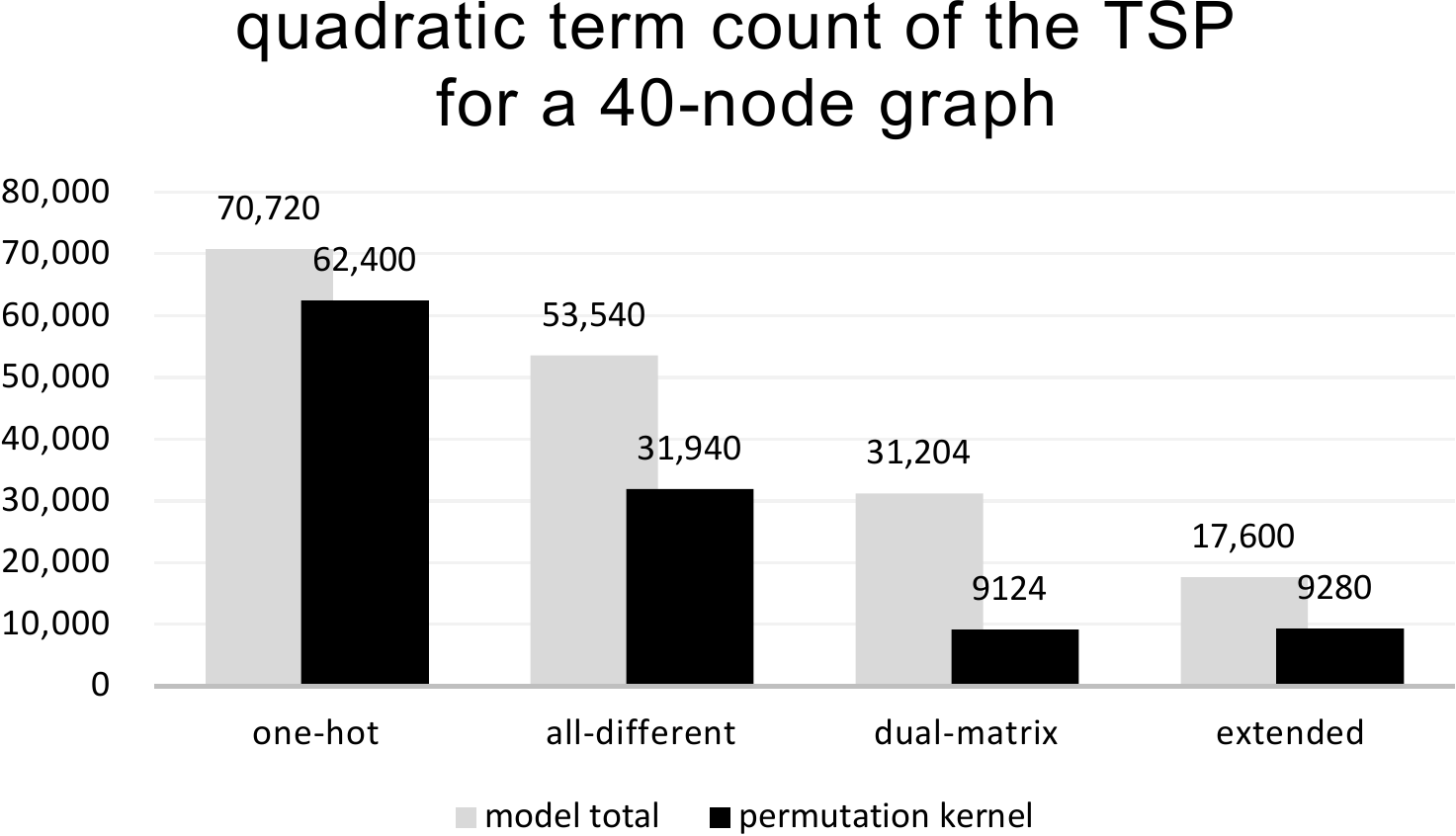}\\
&\\
\includegraphics[width=8cm]{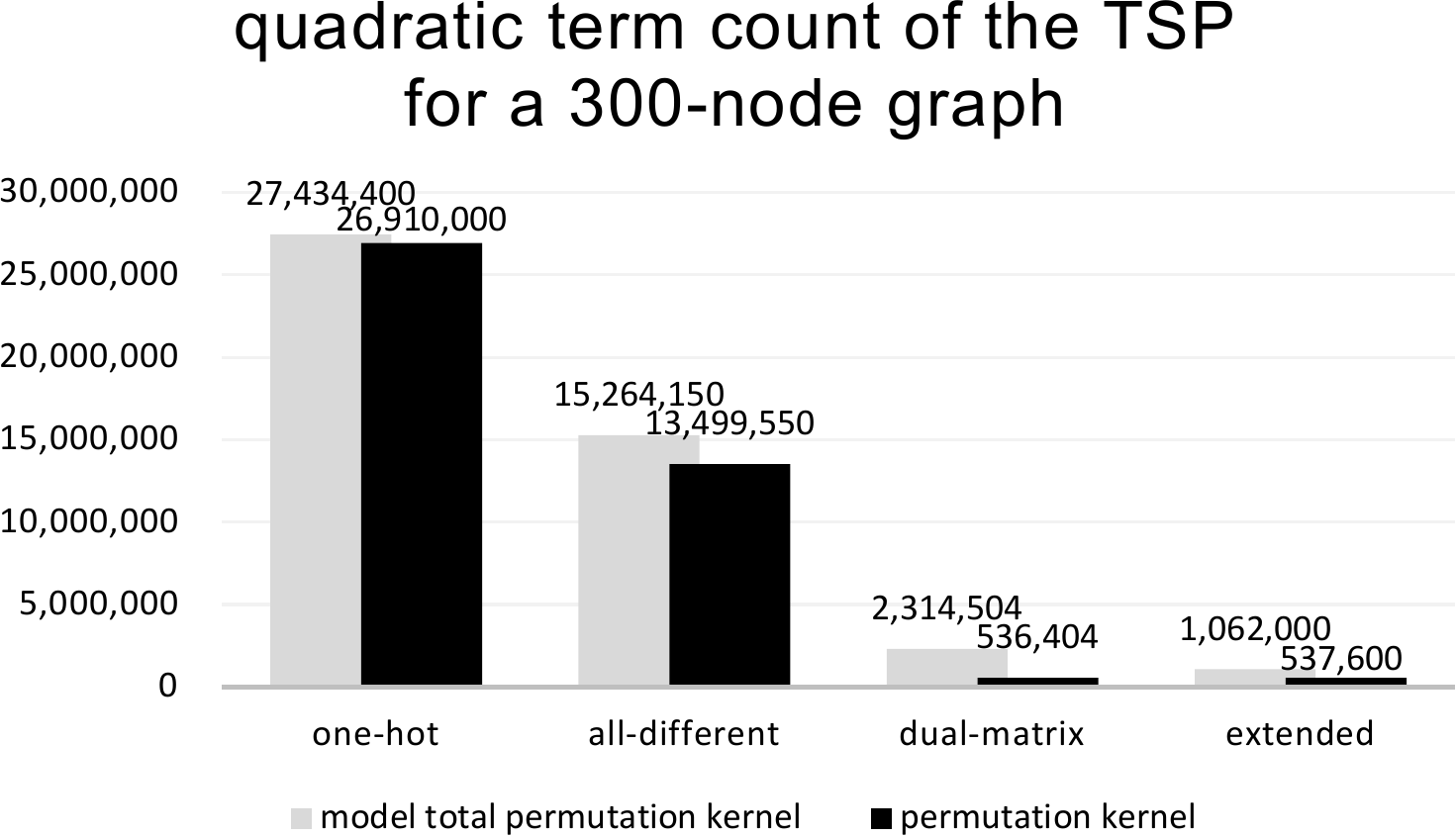} &\includegraphics[width=8cm]{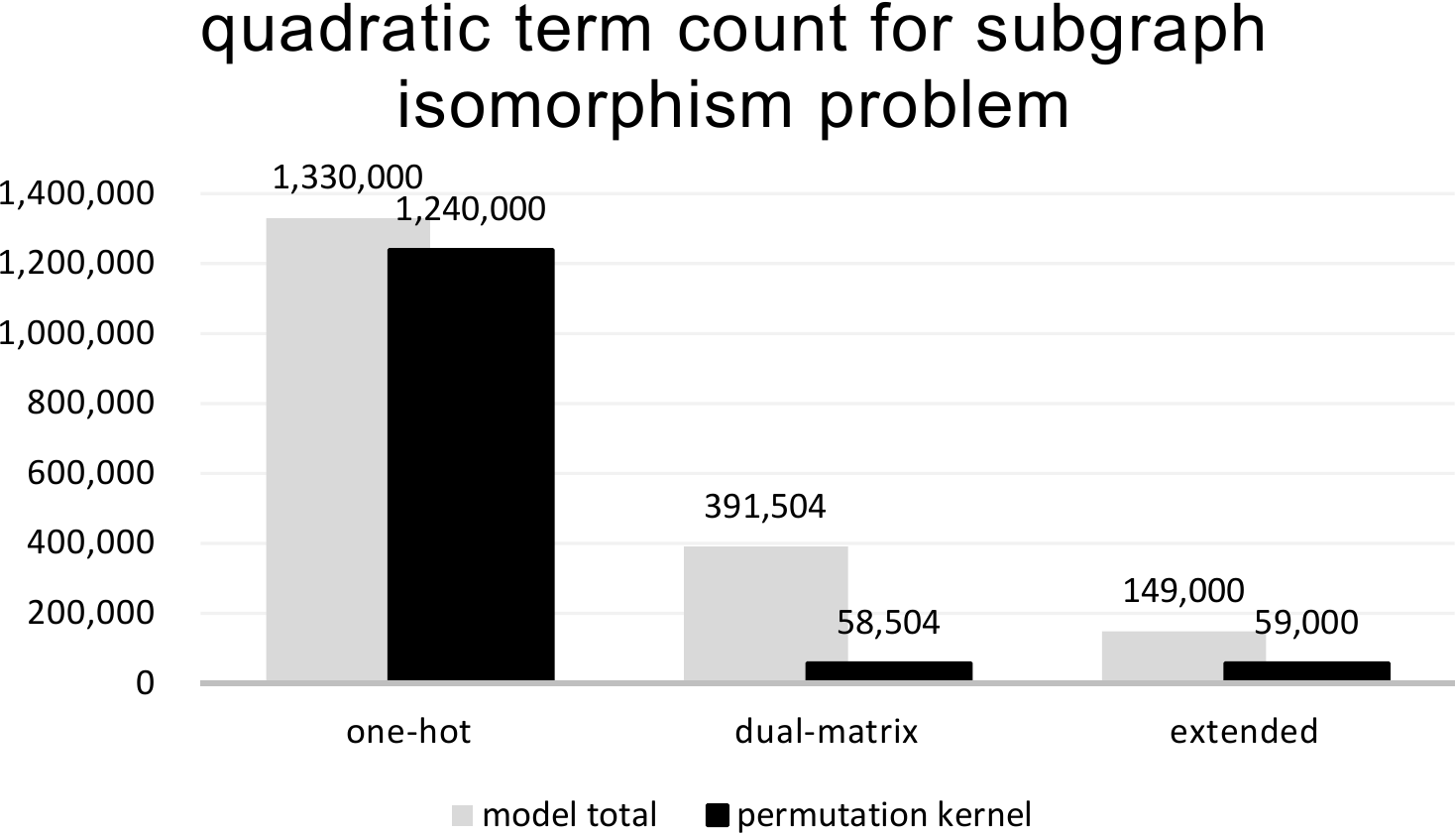} \\
&\\
\includegraphics[width=8cm]{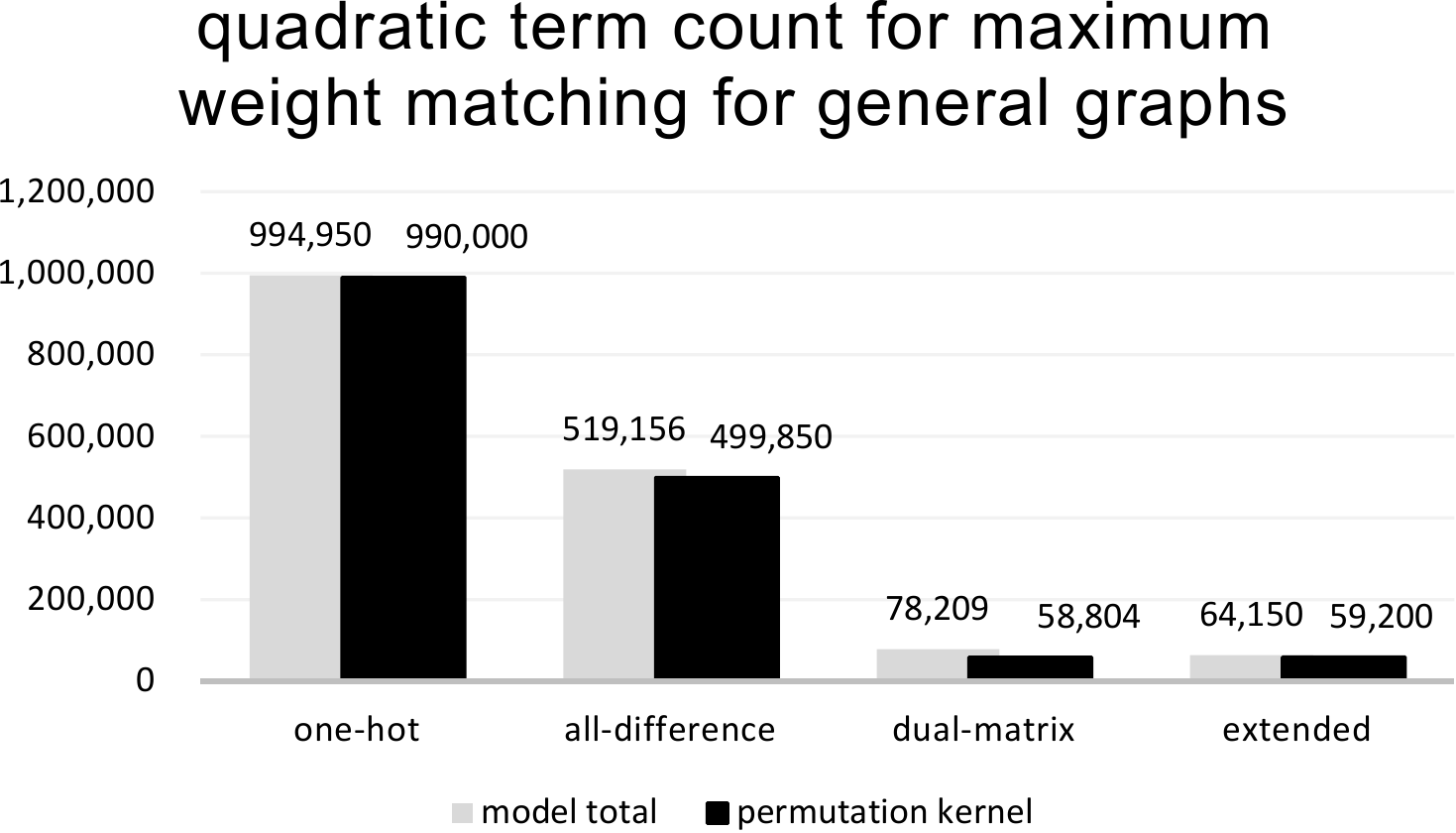}&\includegraphics[width=8cm]{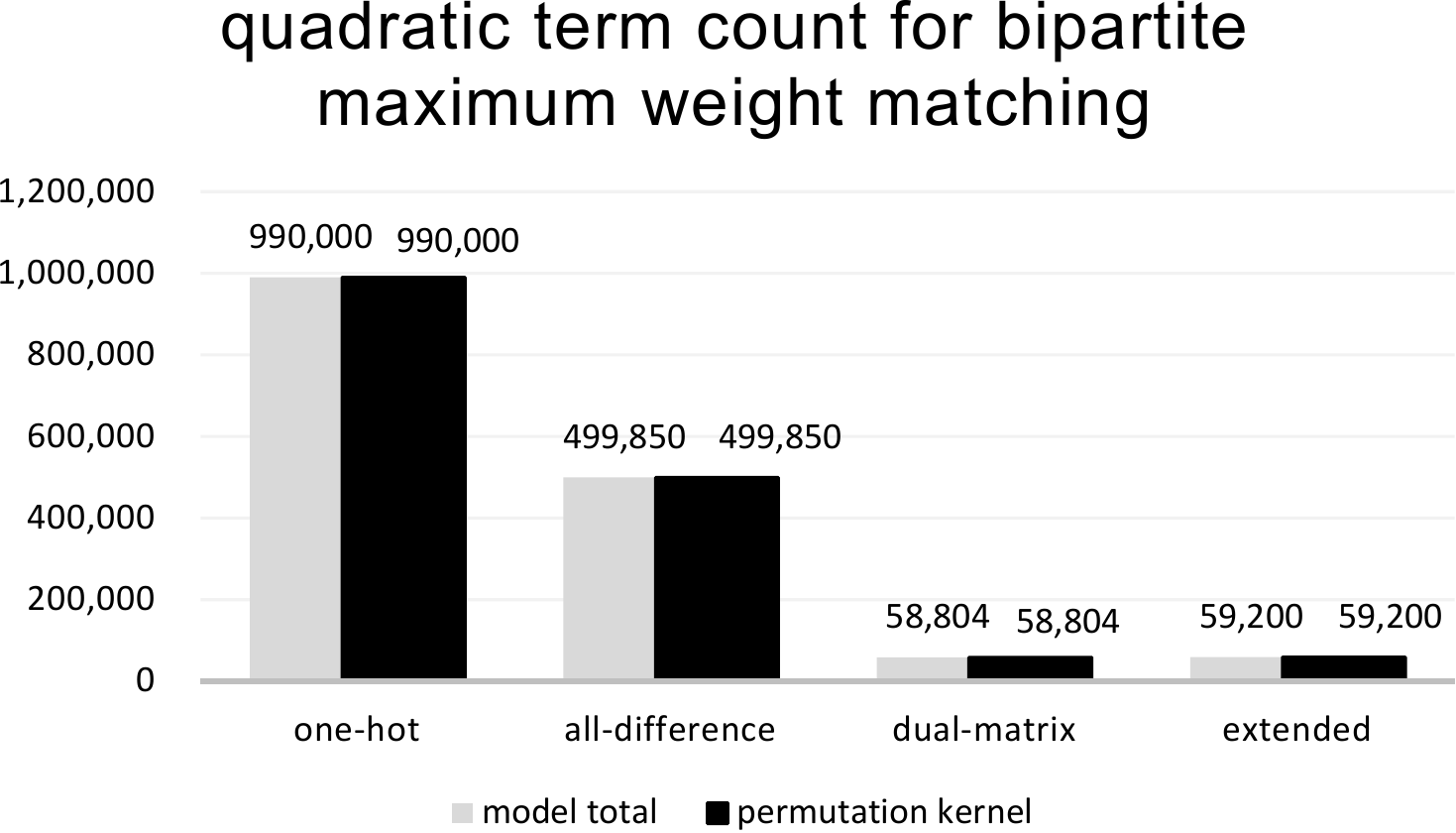}\\
\end{tabular}
	\end{adjustwidth}
\caption{{The quadratic} 
 term counts of Ising models reduced from combinatorial optimization problems
by the conventional one-hot technique (one-hot), the all-different domain-wall technique (all-different),
the dual-matrix domain-wall technique (dual-matrix), and that extended with a one-hot matrix (extended).
We evaluated the total quadratic term counts of the resulting Ising models (model total), as well as those specifically for kernels generating permutations (permutation kernel).}
\label{fig:quadratic_term_count}
\end{figure}

From the figure, it is evident that our dual-matrix domain-wall technique is more effective in reducing the quadratic terms of resulting Ising models when the original PPPs have a lower density.
When the density of the PPP is close to 1, the quadratic terms for PPP interactions outweigh those for permutation kernels.
On the other hand, when the PPP has a lower density, the total number of quadratic terms in the resulting Ising model can be significantly reduced.
For instance,  if the conventional one-hot encoding is used for the TSP with a randomly generated \mbox{300-node} graph, the total number of quadratic terms in the resulting Ising model is 27,434,400.
However, when our dual-matrix domain-wall technique extended by a one-hot matrix is applied, the total number of quadratic terms reduces to only 1,062,000.
This represents a remarkable reduction factor of 25.8.

We should take note of the impact of extending by a one-hot matrix in our dual-matrix domain-wall technique.
Recall that the original technique utilizes variables in matrix $A$ for PPP interactions, while the extended technique introduces an one-hot matrix $X$/$S$.
As shown in Equation (\ref{eq:four}), the dual-matrix domain-wall technique generates four quadratic terms of $A$ for each PPP interaction.
In contrast, the extended dual-matrix domain-wall technique produces only one quadratic term.
When the PPP density is high and involves numerous interactions, the quadratic terms generated by $A$ may overlap, resulting in the Ising model encompassing nearly all pairs of variables in $A$.
Similarly, the Ising models produced by the extended dual-matrix domain-wall technique also include almost all pairs of variables in $X$/$S$.
Consequently, the number of quadratic terms is nearly identical for both techniques, and the extended dual-matrix domain-wall technique does not effectively reduce them.
Likewise, when the PPP density is too low and contains few interactions, the number of quadratic terms generated by variables $A$/$X$/$S$ is small.
Since the extended dual-matrix domain-wall technique yields a slightly larger kernel size, it does not perform well in reducing the overall number of quadratic terms. Therefore, the extended dual-matrix domain-wall technique can effectively reduce the total number of quadratic terms only when the PPP density is moderate---neither too small nor too large.
An example illustrating this is the sub-graph isomorphism problem depicted in Figure \ref{fig:quadratic_term_count}, where the extended dual-matrix domain-wall technique reduces the total number of quadratic terms to less than half.
The quadratic term counts of the extended technique are slightly larger than those of the non-extended technique for the other problem.
However, since the difference is quite small, we should select the extended dual-matrix domain-wall technique if we must use a fixed technique for some reason.

\section{Embedding of Permutation Kernels in D-Wave Quantum Annealer}
\label{sec:embedding}
This section presents the implementation results of Ising kernels on a quantum annealer called the D-Wave Advantage~4.1.
The purpose was to evaluate the suitability of the dual-matrix domain-wall technique for currently available quantum annealers and to verify the impact of quadratic term counts in Ising models on the cell counts.
We employed Ising kernels designed using mathematical expressions such as $H_1^{nn}(S)$ (one-hot), $H_a^{nn}(S)$ (all-different), $H_d^{nn}(S)$ (domain-wall), and 
{$H_e^{nn}(S)$} (extended) for generating permutations. 
Additionally, we used $H_1^{mn}(S)$ (one-hot), $H_d^{mn}(S)$ (domain-wall), and {$H_e^{mn}(S)$} (extended) for partial permutations. 
These Ising kernels were embedded in the D-Wave Advantage~4.1, which was equipped with a 5760-node Pegasus graph designed for solving Ising models.
However, due to faulty nodes and connections, only 5627 cells with 40,279 connections were available.
Therefore, we retrieved the real graph topology and employed “minorminer”, a heuristic tool for finding minor embeddings~\cite{Cai14}, to embed the Ising kernels.

Figure~\ref{fig:implementation} illustrates the number of cells used to embed Ising kernels in the D-Wave Advantage~4.1.
We embedded Ising kernels to generate permutations obtained by $H_1^{nn}(S)$ (one-hot), $H_a^{nn}(S)$ (all-different), {$H_e^{nn}(S)$} (domain-wall), and $H_d^{nn}(S)$ (extended) for $n=5$ until embedding failed. 
For partial permutations, we fixed the parameter $m=12$ and evaluated embedding for $H_1^{mn}(S)$ (one-hot), $H_d^{mn}(S)$ (domain-wall), and {$H_e^{mn}(S)$} (extended) for $n=12$ until embedding failed. 
Ising kernels for generating permutations using the one-hot, all-different, domain-wall, and extended techniques were successfully embedded in the D-Wave Advantage~4.1 topology until $n=16, 28, 35$, and 29, respectively.
Regarding Ising kernels for partial-permutation generation by the one-hot, domain-wall, and extended techniques, successful embedding was achieved until $n=22, 112$, and 75, respectively.
Our dual-matrix domain-wall technique significantly reduced the cell count required for permutation generation.
However, the permutation kernels produced by the extended dual-matrix domain-wall technique utilize slightly more cells than those produced by the non-extended version.
Additionally, the total number of cells used by the extended technique can be smaller, because it uses fewer quadratic terms
for PPP interactions.


{If the entire Ising models for solving combinatorial optimization problems are embedded, the upper bound for the value of $n$ must be smaller.}
For such small values of $n$, conventional digital computers are capable of finding optimal solutions for the problems.
Therefore, it does not make sense to utilize the D-Wave Advantage~4.1 for solving such problems.
However, in the future, if a larger number of cells becomes available, the D-Wave Advantage~4.1 could potentially load Ising models reduced from larger combinatorial optimization problems.
In this scenario, it could uncover optimal solutions that are beyond the reach of conventional digital computers.
In such cases, kernels for generating permutations with larger values of $m$ and $n$ would be employed, and our dual-matrix domain-wall technique would be a powerful approach for designing them.
For a quantum annealer with a fixed size, our dual-matrix domain-wall technique enables us to program larger-sized combinatorial optimization problems than the conventional one-hot encoding techniques, which require a cubic number of quadratic terms.

\begin{figure}[H]
\begin{adjustwidth}{-\extralength}{0cm}
\includegraphics[width=16cm]{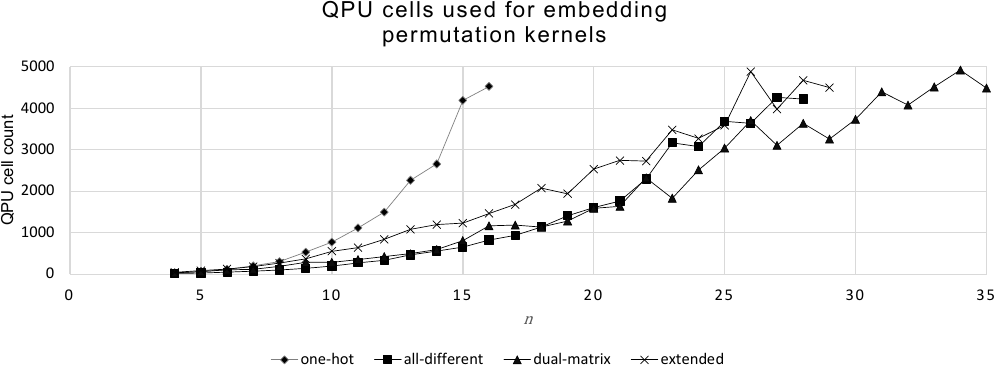}\\
\mbox{}\\
\includegraphics[width=16cm]{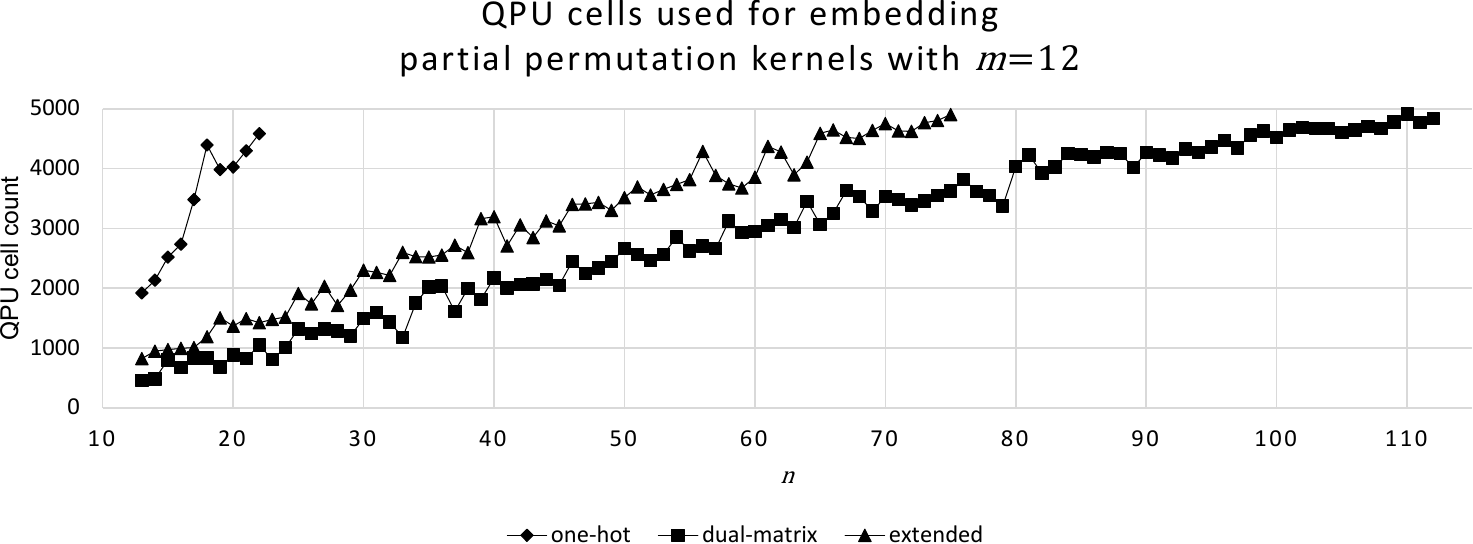}
\end{adjustwidth}
\caption{The numbers of cells of the D-Wave Advantage~4.1 used for embedding permutation kernels.
We evaluated them for Ising kernels that generate permutations of $n$ numbers and partial permutations of $m=12$ numbers selected from $n$ numbers.
 }
\label{fig:implementation}
\end{figure}

\section{Conclusions}\label{sec:concl}
We introduced a novel permutation encoding technique called the dual-matrix domain wall, which can be utilized to design QUBO/Ising kernels for generating permutations and partial permutations.
The conventional one-hot encoding technique for QUBO/Ising kernels requires a cubic number of quadratic terms, whereas our dual-matrix domain-wall technique enables a reduction to a quadratic number of quadratic terms. In this study, we provided a detailed mathematical analysis of QUBO/Ising kernels for permutation generation.

Additionally, we introduced a generic problem known as the particle placement problem (PPP), which can be easily converted to QUBO/Ising models by utilizing the QUBO/Ising kernels for generating permutations. Furthermore, we identified several permutation-based combinatorial optimization problems that can be reduced to the PPP. As a result, these problems can be efficiently converted to QUBO/Ising models using our dual-matrix domain-wall technique.

To evaluate the effectiveness of our approach, we analyzed the number of quadratic terms used in the QUBO/Ising models derived from these optimization problems. Finally, we provided implementation results on the D-Wave Advantage 4.1, a quantum annealer developed by D-Wave Systems.

{
Regrettably, the existing quantum annealers currently accessible are of insufficient size to showcase the practical efficacy of our novel technique.
From a practical perspective, our methodology is specifically tailored for optimal performance exclusively on quantum annealers of a substantial scale.
The dual-matrix domain-wall technique is not presently feasible with the D-Wave Advantage 4.1. Nonetheless, it possesses the potential to evolve into a valuable technique for forthcoming large-scale quantum annealers.
}

\vspace{6pt}

\authorcontributions{Conceptualization, K.N.; methodology, K.N., S.T.. and S.O.; validation, Y.I. and T.Y.; formal analysis, S.T., T.K., R.M., and R.K.; investigation, K.N.,Y.I., J.Y., T.K., and S.O.;  writing---original draft preparation, K.N.; writing---review and editing,  K.N., J.Y., T.K., R.M., and R.K.; visualization,  K.N.; supervision,  K.N.; project administration, T.Y.; funding acquisition, T.Y.
All authors have read and agreed to the published version of the manuscript.}

\funding{This research received no external funding.}

\institutionalreview{{Not applicable.}
}

\informedconsent{{Not applicable.}

}

\dataavailability{{A Python program for generating QUBO/Ising models, as utilized in this paper, is available on GitHub at https://github.com/nakanocs/dual-matrix.}
}

\conflictsofinterest{{Koji Nakano, Takashi Yazane, Junko Yano, Takumi Kato, Shiro Ozaki, Rie Mori, and Ryota Katsuki are the inventors listed on Japanese Patent Application No. 2023-125848. This manuscript discusses the inventions outlined in this patent application.}} 

\begin{adjustwidth}{-\extralength}{0cm}
\reftitle{References}

\PublishersNote{}
\end{adjustwidth}
\end{document}